\newtheorem{lemma}{Lemma}
\newtheorem{theorem}{Theorem}
\theoremstyle{definition}
\newtheorem{definition}{Definition}
\theoremstyle{remark}
\newtheorem*{remark}{Remark}
\begin{document}

\begin{CJK*}{UTF8}{}

\title{Area law in one dimension:\\
Degenerate ground states and Renyi entanglement entropy}

\CJKfamily{gbsn}

\author{Yichen Huang (黄溢辰)\thanks{Supported by DARPA OLE.}\\
Department of Physics, University of California, Berkeley, Berkeley, California 94720, USA\\
yichenhuang@berkeley.edu}

\maketitle

\end{CJK*}

\begin{abstract}

An area law is proved for the Renyi entanglement entropy of possibly degenerate ground states in one-dimensional gapped quantum systems. Suppose in a chain of $n$ spins the ground states of a local Hamiltonian with energy gap $\epsilon$ are constant-fold degenerate. Then, the Renyi entanglement entropy $R_\alpha(0<\alpha<1)$ of any ground state across any cut is upper bounded by $\tilde O(\alpha^{-3}/\epsilon)$, and any ground state can be well approximated by a matrix product state of subpolynomial bond dimension $2^{\tilde O(\epsilon^{-1/4}\log^{3/4}n)}$.

\end{abstract}

\section{Introduction}

The area law states that for a large class of ``physical'' quantum many-body states the entanglement of a region scales as its boundary (area) \cite{ECP10}. This is in sharp contrast to the volume law for generic states \cite{HLW06}: the entanglement of a region scales as the number of sites in (i.e., the volume of) the region. In one dimension (1D), the area law is of particular interest for it characterizes the classical simulability of quantum systems. Specifically, bounded (or even logarithmic divergence of) Renyi entanglement entropy across all cuts implies efficient matrix product state (MPS) representations \cite{VC06}, which underlie the (heuristic) density matrix renormalization group (DMRG) algorithm \cite{Whi92, Whi93}. Since MPS can be efficiently contracted, the 1D local Hamiltonian problem with the restriction that the ground state satisfies area laws is in NP. Furthermore, a structural result (Lemma \ref{ls} and see also \cite{Has07, AKLV13}) from the proof of the area law for the ground state of 1D gapped Hamiltonians is an essential ingredient of the (provably) polynomial-time algorithm \cite{LVV14} for computing such states, establishing that the 1D gapped local Hamiltonian problem is in P. The area law is now a central topic in the emerging field of Hamiltonian complexity \cite{Osb12}.

We start with the definition of entanglement entropy.
\begin{definition} [Entanglement entropy]
The Renyi entanglement entropy $R_\alpha(0<\alpha<1)$ of a bipartite (pure) quantum state $\rho_{AB}$ is defined as
\begin{equation}
R_\alpha(\rho_A)=(1-\alpha)^{-1}\log\mathrm{tr}\rho_A^\alpha,
\end{equation}
where $\rho_A=\mathrm{tr}_B\rho_{AB}$ is the reduced density matrix. The von Neumann entanglement entropy is defined as
\begin{equation}
S(\rho_A)=-\mathrm{tr}(\rho_A\log\rho_A)=\lim_{\alpha\rightarrow1^-}R_\alpha(\rho_A).
\end{equation}
\end{definition}

Here are three arguments why Renyi entanglement entropy is more suitable than von Neumann entanglement entropy for formulating area laws, although the latter is the most popular entanglement measure (for pure states) in quantum information and condensed matter theory.\\
1 (conceptual, classical simulability). In 1D, (unlike bounded Renyi entanglement entropy) bounded von Neumann entanglement entropy across all cuts does not necessarily imply efficient MPS representations; see \cite{SWVC08} for a counterexample. Although slightly outside the scope of the present paper, related results are summarized in Table \ref{tab} (see also \cite{Har14}).\\
2 (conceptual, quantum computation). Quantum states with little von Neumann entanglement entropy across all cuts support universal quantum computation, while an analogous statement for Renyi entanglement entropy is expected to be false \cite{VdN13}.\\
3 (technical). An area law for Renyi entanglement entropy implies that for von Neumann entanglement entropy (Table \ref{tab}), as $R_\alpha$ is a monotonically decreasing function of $\alpha$.

\begin{table}
\caption{Relations between various conditions in 1D: unique ground state of a gapped local Hamiltonian (Gap), exponential decay of correlations (Exp), area law for Renyi entanglement entropy $R_\alpha,\forall\alpha$ (AL-$R_\alpha$), area law for von Neumann entanglement entropy (AL-$S$), efficient matrix product state representation (MPS). A check (cross) mark means that the item in the row implies (does not imply) the item in the column. The asterisk marks one contribution of the present paper. It is an open problem whether exponential decay of correlations implies area laws for Renyi entanglement entropy $R_\alpha,\forall\alpha$: Indeed, Theorem 4 in \cite{BH13} (or Theorem 1 in \cite{BH15}) may lead to divergence of $R_\alpha$ if $\alpha$ is small.}

\begin{tabular}{|c|c|c|c|c|}
\hline
&Exp&AL-$R_\alpha$&AL-$S$&MPS\\
\hline
Gap&\checkmark\cite{Has04}&\checkmark*&\checkmark\cite{Has07, AKLV13}&\checkmark\cite{Has07, AKLV13}\\
\hline
Exp&\checkmark&?&\checkmark\cite{BH13, BH15}&\checkmark\cite{BH13, BH15}\\
\hline
AL-$R_\alpha$&\textsf{X}&\checkmark&\checkmark&\checkmark\cite{VC06}\\
\hline
AL-$S$&\textsf{X}&\textsf{X}&\checkmark&\textsf{X}\cite{SWVC08}\\
\hline
\end{tabular}\label{tab}
\end{table}

Hastings first proved an area law for the ground state of 1D Hamiltonians with energy gap $\epsilon$: The von Neumann entanglement entropy across any cut is upper bounded by $2^{O(\epsilon^{-1})}$ \cite{Has07}, where the local dimension of each spin (denoted by ``$d$'' in qu\emph{d}its) is assumed to be an absolute constant. The Renyi entanglement entropy $R_\alpha$ for $\alpha_0<\alpha<1$ was also discussed, where $\alpha_0$ is $\epsilon$-dependent and $\lim_{\epsilon\rightarrow0^+}\alpha_0=1$. The bound on the von Neumann entanglement entropy was recently improved to $\tilde O(\epsilon^{-3/2})$ \cite{AKLV13} (see Section \ref{notes} for an explanation of this result), where $\tilde O(x):=O(x~\mathrm{poly}\log x)$ hides a polylogarithmic factor. These proofs of area laws assume a unique (nondegenerate) ground state.

Ground-state degeneracy is an important physical phenomenon often associated with symmetry breaking (e.g., the transverse field Ising chain) and/or topological order (e.g., the Haldane/AKLT chain with open boundary conditions). Since not all degenerate ground states of 1D gapped Hamiltonians have exponential decay of correlations, it may not be intuitively obvious to what extent they satisfy area laws.

In the present paper, an area law is proved for the Renyi entanglement entropy of possibly degenerate ground states in 1D gapped systems. Since in this context the standard bra-ket notation may be cumbersome, quantum states and their inner products are simply denoted by $\psi,\phi\ldots$ and $\langle\psi,\phi\rangle$, respectively, cf. $\||\psi\rangle-|\phi\rangle\|$ versus $\|\psi-\phi\|$. Suppose in a chain of $n$ spins the ground states are constant-fold degenerate.
\begin{theorem} \label{t1}
(a) The Renyi entanglement entropy $R_\alpha(0<\alpha<1)$ of any ground state across any cut is upper bounded by $\tilde O(\alpha^{-3}/\epsilon)$;\\
(b) Any ground state $\psi$ can be approximated by an MPS $\phi$ of subpolynomial bond dimension $2^{\tilde O(\epsilon^{-1/4}\log^{3/4}n)}$ such that $|\langle\psi,\phi\rangle|>1-1/\mathrm{poly}(n)$.
\end{theorem}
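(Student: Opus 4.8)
\section*{Proof proposal}

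The plan is to reduce, for any fixed ground state $\psi$ and any fixed cut, to the nondegenerate case and there invoke the approximate-ground-space-projector (AGSP) machinery of \cite{AKLV13} underlying Lemma \ref{ls}, but to extract the decay of the \emph{entire} Schmidt spectrum of $\psi$ rather than a single low-rank truncation --- this is necessary because $R_\alpha$ with small $\alpha$ weighs the tail of the Schmidt spectrum heavily. I would first observe that the AGSP construction itself is insensitive to degeneracy: approximating the indicator of the ground energy by a polynomial of $H$ and truncating the interaction terms away from the cut via the detectability lemma produces, for a free region-size parameter $\ell$, a $(D,\Delta)$-AGSP $K$ with $KP = PK = P$ (with $P$ now the projector onto the $g$-dimensional ground space) and $\|K(I-P)\| \le \sqrt{\Delta}$, where $\log D$ and $\log(1/\Delta)$ both grow with $\ell$ in such a way that $\beta := \log(1/\Delta)/\log D$ can be made arbitrarily large; nothing in this construction uses $\mathrm{rank}\,P = 1$.

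The place where degeneracy genuinely intervenes, and the step I expect to be the main obstacle, is the bootstrap that converts an AGSP into an area law. In the nondegenerate case one iterates $K$ on the rank-one trial state given by the top Schmidt pair of $\psi$, exploiting that $K$ contracts everything orthogonal to $\psi$; but here $K$ only contracts what is orthogonal to the whole ground space, and the ``error direction'' of any low Schmidt-rank approximation of a ground state lies essentially inside the ground space, so the iteration stalls. To get around this I would isolate $\psi$ by passing to a modified one-dimensional local Hamiltonian having $\psi$ as its \emph{exact} unique ground state with gap $\Omega(\epsilon)$, using that for $g = O(1)$ the ground-state degeneracy of a 1D gapped chain is, up to exponentially small corrections, an edge phenomenon: one restricts the Hilbert spaces of two $O(1)$-site regions $L,R$ at the ends to the supports of $\rho_L^\psi,\rho_R^\psi$, obtaining a subspace $\mathcal H'$ containing $\psi$ on which $H' := P'HP'$ is local, has $\psi$ as a ground state at the original energy, and --- because the remaining ground states are then exponentially orthogonal to $\mathcal H'$ --- is gapped by $\Omega(\epsilon)$ with $\psi$ unique. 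Cuts within $O(1)$ of an end contribute only $O(1)$ to $R_\alpha$ and are treated separately; the symmetry-breaking case, in which the degeneracy is not edge-localized, is handled by noting that a global symmetry $U = U_A\otimes U_B$ carries the symmetry-broken ground states to one another with identical entanglement spectrum, so that a superposition bound (mixing $g$ isospectral marginals costs only $+\log g = O(1)$ in $R_\alpha$) reduces matters to one symmetry-broken state, singled out by an infinitesimal symmetry-breaking field. Carrying out this reduction uniformly in the phase and the cut position, while keeping the gap $\Omega(\epsilon)$, is the crux of the argument.

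Granting the reduction, the standard AGSP bootstrap applies to the unique ground state $\psi$ of $H'$: $K$ forces $\lambda_1(\rho_A^\psi) \ge 1/(2D)$ once $\Delta \le 1/(2D)$, and $t$-fold iteration gives a Schmidt-rank-$\le D^t$ approximation of error $O(\Delta^t)$, hence $\sum_{i>m}\lambda_i(\rho_A^\psi) \le C m^{-\beta}$ with $\beta = \log(1/\Delta)/\log D$ tunably large. For part (a), taking $\beta$ (equivalently $\ell$) large enough that $\sum_i \lambda_i(\rho_A^\psi)^\alpha$ converges with room to spare, and tracking the dependence of $C$ and $\beta$ on $\ell,\alpha,\epsilon$, yields $R_\alpha(\rho_A^\psi) = (1-\alpha)^{-1}\log\sum_i \lambda_i^\alpha \le \tilde O(\alpha^{-3}/\epsilon)$ (for $\alpha$ bounded away from $0$ one may instead use monotonicity of $R_\alpha$ in $\alpha$); the cubic loss in $\alpha$ is the price of making both $\beta$ and the iteration count large enough for the series to converge when $\alpha$ is small. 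For part (b), truncate the Schmidt decomposition of $\psi$ at each of the $n$ cuts so the discarded weight is $\le 1/\mathrm{poly}(n)$; since the tail decays faster than any polynomial, the needed Schmidt rank $m$ satisfies $\log m = \tilde O(\epsilon^{-1/4}\log^{3/4}n)$ after optimizing $\ell$ (and the iteration count) against the constraints, and the standard gluing of the per-cut truncations into a single matrix product state produces a $\phi$ of that bond dimension with $|\langle\psi,\phi\rangle| > 1 - 1/\mathrm{poly}(n)$.
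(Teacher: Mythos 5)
There is a genuine gap at exactly the step you flag as the crux: the reduction of a degenerate gapped chain to a nondegenerate one having the chosen ground state $\psi$ as its unique gapped ground state. The theorem assumes only $f$-fold exact degeneracy with a gap $\epsilon$ --- no structural hypothesis on \emph{why} the ground space is degenerate --- so the dichotomy you propose (edge-localized degeneracy handled by restricting the end Hilbert spaces to $\mathrm{supp}\,\rho_L^\psi,\mathrm{supp}\,\rho_R^\psi$, versus symmetry breaking handled by an on-site symmetry $U_A\otimes U_B$ permuting isospectral ground states) is not exhaustive and neither branch is rigorous. In the ``edge'' branch, $\mathrm{supp}\,\rho_L^\psi$ can be the full Hilbert space of the end region, the other ground states need not be exponentially orthogonal to the restricted subspace, and $P'HP'$ need not retain a gap $\Omega(\epsilon)$ with $\psi$ unique; in the ``symmetry'' branch, nothing in the hypotheses supplies a tensor-product symmetry relating the ground states, nor need the degenerate ground states have identical entanglement spectra. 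Since you correctly identify that without this reduction the AGSP bootstrap stalls (the error of a low-rank approximant to $\psi$ lives inside the ground space, which the AGSP does not contract), the proposal as written does not go through.

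The paper never performs such a reduction. It works with the degenerate space directly: (i) the truncation $H\mapsto H^{(t)}$ splits the exact degeneracy by $2^{-\Omega(t)}$, and the AGSP polynomial is corrected by composing the Chebyshev step with a Lagrange interpolation polynomial so that it fixes \emph{all} $f$ nearly degenerate eigenstates exactly (Lemma \ref{l2}); (ii) the bootstrap is modified to produce low-Schmidt-rank approximations to an entire orthonormal basis $\varphi_0,\varphi_1,\ldots$ of the truncated ground space, obtained by first maximizing $\|P'\psi\|$ over product states and then, for the next basis vector, maximizing overlap with $\varphi_1$ and subtracting the component along the previously constructed approximant (Lemma \ref{ls}); (iii) an arbitrary ground state $\Psi$ is then expanded in this basis, and a sequence of AGSPs with growing truncation parameter $t_i$ gives approximants of rank $2^{\tilde O(\epsilon^{-1}+\epsilon^{-1/4}i^{3/4})}$ and error $2^{-\Omega(i)}$; the sublinear exponent $i^{3/4}$ against the linear error decay is what makes $\sum_j\Lambda_j^{2\alpha}$ converge for every fixed $\alpha>0$ and yields the $\tilde O(\alpha^{-3}/\epsilon)$ bound. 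Your single-AGSP tail bound $\sum_{j>m}\lambda_j\le Cm^{-\beta}$ with $\alpha$-dependent $\beta$ could plausibly be made to work for part (a) once the degeneracy is handled, but the degeneracy handling itself is the missing idea, and it is supplied in the paper by (i) and (ii), not by a reduction to the nondegenerate case.
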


\begin{remark}
The proof of Theorem \ref{t1} assumes constant-fold exact ground-state degeneracy and open boundary conditions (with one cut). It should be clear that a minor modification of the proof leads to the same results in the presence of an exponentially small $2^{-\Omega(n)}$ splitting of the ground-state degeneracy (as is typically observed in physical systems) and works for periodic boundary conditions (with two cuts). However, it is an open problem to what extent degenerate ground states satisfy area laws if the degeneracy grows with the system size. Theorem \ref{t1}(b) is a theoretical justification of the practical success of DMRG as a (heuristic) variational algorithm over MPS to compute the ground-state space in 1D gapped systems with ground-state degeneracy, and paves the way for a (provably) polynomial-time algorithm to compute the ground-state space. As an important immediate corollary of Theorem \ref{t1}(a), the von Neumann entanglement entropy of a unique ground state is upper bounded by $\tilde O(\epsilon^{-1})$, which even improves the result of \cite{AKLV13} and may possibly be tight up to a polylogarithmic factor. An example with the von Neumann entanglement entropy $S=\tilde\Omega(\epsilon^{-1/4})$ was constructed in \cite{GH10}; see also \cite{Ira10} for a translationally invariant construction with $S=\Omega(\epsilon^{-1/12})$.
\end{remark}

We loosely follow the approach in \cite{AKLV13} with additional technical ingredients. Approximate ground-space projection (AGSP) \cite{ALV12} is a tool for bounding the decay of Schmidt coefficients: An ``efficient'' family of AGSP imply an area law. Section \ref{s3} is devoted to perturbation theory, which is necessary to improve the efficiency of AGSP. As a technical contribution, the analysis in Section 6 of \cite{AKLV13} is improved (and simplified), resulting in a tightened upper bound $\tilde O(\epsilon^{-1})$ (versus $\tilde O(\epsilon^{-3/2})$ given in \cite{AKLV13}) on the (von Neumann) entanglement entropy. Although the perturbation theory is developed in 1D, generalizations to higher dimensions may be straightforward but are not presented in the present paper. In Section \ref{s2}, a family of AGSP are constructed in 1D systems with nearly degenerate ground states. Although the ground-state degeneracy of the original Hamiltonian is assumed to be exact, perturbations may lead to an exponentially small splitting of the degeneracy. Then, ``fine tunning'' using Lagrange interpolation polynomials appears necessary to repair this splitting at the level of AGSP. In Section \ref{s4}, an area law is derived from AGSP for any ground state by constructing a sequence of approximations to a set of basis vectors of the ground-state space (it requires new ideas to keep track of such a set of basis vectors). The construction is more efficient than the approach (Corollary 2.4 and Section 6.2) in \cite{AKLV13}, resulting in an area law for the Renyi entanglement entropy. Finally, efficient MPS representations follow from the decay of the Schmidt coefficients.

\section{Perturbation theory} \label{s3}

Assume without loss of generality that the original 1D Hamiltonian is $H'=\sum_{i=-n}^nH'_i$, where $0\le H'_i\le1$ acts on the spins $i$ and $i+1$. Consider the middle cut. Let $\epsilon_0(\cdot)$ denote the ground-state energy of a Hamiltonian. Define
\begin{equation}
H=H_L+H_{-s}+H_{1-s}+\cdots+H_{s-1}+H_s+H_R
\end{equation}
as\\
(i) $H_L=H'_L-\epsilon_0(H'_L)$ and $H_R=H'_R-\epsilon_0(H'_R)$, where $H'_L:=\sum_{i=-n}^{-s-1}H'_i$ and $H'_R:=\sum_{i=s+1}^{n}H'_i$;\\
(ii) $H_i=H'_i$ for $i=\pm s$;\\
(iii) $H_i=H'_i-\epsilon_0(H'_M)/(2s-1)$ for $1-s\le i\le s-1$, where $H'_M:=\sum_{i=1-s}^{s-1}H'_i$.\\
Hence,\\
(a) $H_L\ge0,~H_R\ge0,$ and $\epsilon_0(H_L)=\epsilon_0(H_R)=0$;\\
(b) $0\le H_i\le1$ for $i=\pm s$;\\
(c) $0\le\sum_{i=1-s}^{s-1}H_i\le2s-1$ and $\epsilon_0(\sum_{i=1-s}^{s-1}H_i)=0$;\\
(d) $H=H'-\epsilon_0(H'_L)-\epsilon_0(H'_M)-\epsilon_0(H'_R)$ so that the (degenerate) ground states and the energy gap are preserved.

Suppose the ground states of $H$ are $f$-fold degenerate, where $f=O(1)$ is assumed to be an absolute constant. Let $0\le\epsilon_0=\epsilon_1=\cdots=\epsilon_{f-1}<\epsilon_f\le\epsilon_{f+1}\le\cdots$ be the lowest energy levels of $H$ with the energy gap $\epsilon:=\epsilon_f-\epsilon_0$. Define
\begin{equation} 
H_L^{\le t}=H_LP^{\le t}_L+t(1-P^{\le t}_L),
\end{equation}
where $P^{\le t}_L$ is the projection onto the subspace spanned by the eigenstates of $H_L$ with eigenvalues at most $t$. $H_R^{\le t}$ is defined analogously. Let
\begin{equation}
H^{(t)}:=H_L^{\le t}+H_{-s}+H_{1-s}+\cdots+H_{s-1}+H_s+H_R^{\le t}\le 2t+2s+1
\end{equation}
be the truncated Hamiltonian with the lowest energy levels $0\le\epsilon'_0\le\epsilon'_1\le\cdots$ and the corresponding (orthonormal) eigenstates $\phi_0^{(t)},\phi_1^{(t)},\ldots$. Note that all states are normalized unless otherwise stated. Define $\epsilon'=\epsilon'_f-\epsilon'_0$ as the energy gap of $H^{(t)}$. Let $B:=H_{-s}+H_s$ be the sum of boundary terms, and $P_t$ be the projection onto the subspace spanned by the eigenstates of $H-B$ with eigenvalues at most $t$ so that
\begin{equation} \label{Pt}
H_LP_t=H_L^{\le t}P_t,H_RP_t=H_R^{\le t}P_t~\Rightarrow~HP_t=H^{(t)}P_t.
\end{equation}

\begin{lemma} \label{l4}
$0\le\epsilon'_0\le\epsilon_0\le2$ and $\epsilon'_f\le\epsilon_f\le[\log_2f]+4=O(1)$.
\end{lemma}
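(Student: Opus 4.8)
The plan is to obtain the first three inequalities by elementary operator comparisons and to reserve the work for $\epsilon_0\le 2$ and $\epsilon_f\le[\log_2 f]+4$. Since $H^{(t)}$ is a sum of positive semidefinite operators ($H_L^{\le t},H_R^{\le t}\ge 0$ by construction, $H_{\pm s}\ge 0$ by (b), and $\sum_{i=1-s}^{s-1}H_i\ge 0$ by (c)), we get $H^{(t)}\ge 0$ and hence $\epsilon'_0\ge 0$. In the eigenbasis of $H_L$ one has $H_L^{\le t}=\min(H_L,t)\le H_L$, similarly $H_R^{\le t}\le H_R$, so $H^{(t)}\le H$, and the Courant--Fischer min--max principle yields $\epsilon'_k\le\epsilon_k$ for every $k$, in particular $\epsilon'_0\le\epsilon_0$ and $\epsilon'_f\le\epsilon_f$. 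For $\epsilon_0\le 2$, apply the variational principle to the product state $\psi_L\otimes\psi_M\otimes\psi_R$, where $\psi_L,\psi_M,\psi_R$ are ground states of $H_L$, $\sum_{i=1-s}^{s-1}H_i$, $H_R$: these act on disjoint blocks of spins and each has ground energy $0$, so only $H_{-s}+H_s$ survives and $\epsilon_0\le\|H_{-s}\|+\|H_s\|\le 2$.

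The remaining bound $\epsilon_f\le[\log_2 f]+4$ is the substantive claim; by min--max it suffices to produce an $(f+1)$-dimensional subspace on which the Rayleigh quotient of $H$ is at most $[\log_2 f]+4$. Let $\mathcal G$ denote the $f$-dimensional ground space, $H|_{\mathcal G}=\epsilon_0$, and fix a block $C$ of consecutive spins with $\dim\mathcal H_C>f$. Then $\mathcal G$ cannot be invariant under all operators on $C$: otherwise $\Pi_{\mathcal G}$ would commute with the full matrix algebra on $C$, forcing $\Pi_{\mathcal G}=\mathbf 1_C\otimes\Pi'$ and hence $\dim\mathcal H_C\mid f$. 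So there are a unitary $U$ supported on $C$ and a unit vector $\psi_0\in\mathcal G$ with $\psi_1:=(\mathbf 1-\Pi_{\mathcal G})U\psi_0\ne 0$; twirling $\Pi_{\mathcal G}$ over $C$ gives $\mathbb E_U\|\psi_1\|^2\ge 1-(\dim\mathcal G)/\dim\mathcal H_C$, so $U$ and $\psi_0$ can be chosen with $\|\psi_1\|$ bounded below once $\dim\mathcal H_C$ exceeds $f$ by a constant factor. Using $H\psi_0=\epsilon_0\psi_0$ and $[\Pi_{\mathcal G},H]=0$ one finds $H\psi_1=\epsilon_0\psi_1+[H,U]\psi_0$, the cross terms $\langle g,H\psi_1\rangle$ vanish for $g\in\mathcal G$, and $\langle\psi_1,H\psi_1\rangle=\epsilon_0\|\psi_1\|^2+E$ with $E:=\langle U\psi_0,(H-\epsilon_0)U\psi_0\rangle=\langle\psi_0,(U^\dagger HU-H)\psi_0\rangle$; since $U$ meets only the $O(\log f)$ bond terms overlapping $C$, $E=O(\log f)$. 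Consequently the Rayleigh quotient on $\mathcal G\oplus\mathbb C\psi_1$ is at most $\max\{\epsilon_0,\,\epsilon_0+E/\|\psi_1\|^2\}=O(\log f)$, so $\epsilon_f=O(\log f)$.

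The part I expect to be the real obstacle is tightening this $O(\log f)$ to the stated $[\log_2 f]+4$: one must choose $|C|$ large enough to keep $\|\psi_1\|$ away from $0$ yet small enough to keep $E$ and the number of boundary bonds small, and handle the case of large $\epsilon_0$. An alternative that sidesteps $\|\psi_1\|$ is a case split on $m:=\dim\mathcal G_L\cdot\dim\mathcal G_M\cdot\dim\mathcal G_R$, the product of dimensions of the zero-energy ground spaces of $H_L$, $\sum_{i=1-s}^{s-1}H_i$, $H_R$: when $\epsilon_0=0$ every ground state of $H$ lies in $\mathcal G_L\otimes\mathcal G_M\otimes\mathcal G_R$, so $m\ge f$; if $m>f$ then $H\le(H_L+\sum_iH_i+H_R)+2$ already has an $(f+1)$-dimensional subspace of energy $\le 2$, and if $m=f$ one enlarges one tensor factor by the eigenstates of the corresponding piece up to its first excited level, reducing the problem to bounding the smallest of those three spectral gaps by $[\log_2 f]+2$.
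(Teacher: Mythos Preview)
Your treatment of $0\le\epsilon'_0\le\epsilon_0\le 2$ and of $\epsilon'_f\le\epsilon_f$ is correct and matches the paper (variational principle, $H^{(t)}\le H$, min--max). The divergence is in the bound on $\epsilon_f$.

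Your unitary-perturbation argument is valid and indeed yields $\epsilon_f=O(\log f)$, but---as you already suspect---it cannot be tightened to $[\log_2 f]+4$ without a new idea. The obstruction is structural: the Rayleigh quotient on $\psi_1/\|\psi_1\|$ is $\epsilon_0+E/\|\psi_1\|^2$, and to make $\|\psi_1\|^2$ close to $1$ you must take $\dim\mathcal H_C\gg f$, which forces $|C|$ (hence $E$) to grow. The best tradeoff still leaves a multiplicative constant in front of $\log_2 f$ strictly larger than $1$. Your alternative via $m=\dim\mathcal G_L\cdot\dim\mathcal G_M\cdot\dim\mathcal G_R$ only handles $\epsilon_0=0$ and then reduces to an unproved spectral-gap bound, so it is not a complete route either.

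The paper sidesteps all of this with a much simpler construction that never touches the ground space $\mathcal G$ at all. Set $f'=[\log_2 f]+1$ and let $\phi_R$ be the ground state of $\sum_{i=f'+1-s}^{s}H_i+H_R$. For \emph{every} state $\phi_M$ on the block of $f'$ spins $1-s,\ldots,f'-s$, consider the product state $\psi_L\otimes\phi_M\otimes\phi_R$. Its energy is $0$ from $H_L$, the ground energy of the right piece from $\phi_R$, and at most $f'+1$ from the $f'+1$ bond terms $H_{-s},\ldots,H_{f'-s}$ that overlap the free block (each $\le 1$). Comparing the right-piece ground energy to $\langle\psi_0,H\psi_0\rangle=\epsilon_0$ (using $H_L\ge 0$) gives a uniform bound $\epsilon_0+f'+1\le f'+3$. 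Since the free block has Hilbert-space dimension $\ge 2^{f'}>f$, this is an $(f+1)$-dimensional subspace with Rayleigh quotient at most $f'+3=[\log_2 f]+4$, and min--max finishes.

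So the missing idea is simply: build the trial subspace out of product states varying on a short block, rather than by locally rotating a ground state. This removes the $1/\|\psi_1\|^2$ penalty entirely and gives the exact constant in one line.
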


\begin{proof}
Let $\psi_0,\psi_L,\psi_M,\psi_R$ be the ground states of $H,H_L,\sum_{i=1-s}^{s-1}H_i,H_R$, respectively.
\begin{eqnarray}
&&\epsilon_0\le\langle\psi_L\psi_M\psi_R,H\psi_L\psi_M\psi_R\rangle\nonumber\\
&&=\langle\psi_L,H_L\psi_L\rangle+\left\langle\psi_M,\sum_{i=1-s}^{s-1}H_i\psi_M\right\rangle+\langle\psi_R,H_R\psi_R\rangle+\langle\psi_L\psi_M\psi_R,B\psi_L\psi_M\psi_R\rangle\le\|B\|\le2.
\end{eqnarray}
Let $f'=[\log_2f]+1$ and $\phi_R$ be the ground state of $\sum_{i=f'-s+1}^sH_i+H_R$. For any state $\phi_M$ of the spins $1-s,2-s,\cdots,f'-s$,
\begin{eqnarray}
&&\langle\psi_L\phi_M\phi_R,H\psi_L\phi_M\phi_R\rangle\nonumber\\
&&=\langle\psi_L,H_L\psi_L\rangle+\left\langle\psi_L\phi_M\phi_R,\sum_{i=-s}^{f'-s}H_i\psi_L\phi_M\phi_R\right\rangle+\left\langle\phi_R,\left(\sum_{i=f'+1-s}^sH_i+H_R\right)\phi_R\right\rangle\nonumber\\
&&\le\langle\psi,H_L\psi\rangle+\left\langle\psi,\sum_{i=-s}^{f'-s}H_i\psi\right\rangle+f'+1+\left\langle\psi,\left(\sum_{i=f'+1-s}^sH_i+H_R\right)\psi\right\rangle\nonumber\\
&&\le\langle\psi,H\psi\rangle+f'+1=\epsilon_0+f'+1\le f'+3~\Rightarrow\epsilon_f\le f'+3=[\log_2f]+4.
\end{eqnarray}
\end{proof}

Let $\phi^{(r)}$ be an eigenstate of $H^{(r)}$ with eigenvalue $\epsilon^{(r)}$.

\begin{lemma} \label{l5}
For $r,t>\epsilon^{(r)}$,
\begin{equation}
\|(1-P_t)\phi^{(r)}\|^2\le|\langle\phi^{(r)},(1-P_t)BP_t\phi^{(r)}\rangle|/(\min\{r,t\}-\epsilon^{(r)}).
\end{equation}
\end{lemma}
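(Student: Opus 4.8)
The plan is to read the bound off the eigenvalue equation $H^{(r)}\phi^{(r)}=\epsilon^{(r)}\phi^{(r)}$ by isolating the part of $\phi^{(r)}$ outside the range of $P_t$. Abbreviate $\phi:=\phi^{(r)}$, $\epsilon:=\epsilon^{(r)}$, $Q:=1-P_t$. Since $Q=Q^2=Q^\dagger$ and $\epsilon$ is real, pairing the eigenvalue equation with $Q\phi$ gives $\epsilon\|Q\phi\|^2=\langle Q\phi,H^{(r)}\phi\rangle$, and expanding $\phi=P_t\phi+Q\phi$ on the right gives $\epsilon\|Q\phi\|^2=\langle Q\phi,H^{(r)}P_t\phi\rangle+\langle Q\phi,H^{(r)}Q\phi\rangle$. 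So it suffices to (i) identify the cross term $\langle Q\phi,H^{(r)}P_t\phi\rangle$ with $\langle\phi,(1-P_t)BP_t\phi\rangle$, and (ii) prove $\langle Q\phi,H^{(r)}Q\phi\rangle\ge\min\{r,t\}\,\|Q\phi\|^2$; the lemma will then follow by substitution and rearrangement, using $\min\{r,t\}-\epsilon^{(r)}>0$, which holds since $r,t>\epsilon^{(r)}$.

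Both (i) and (ii) rest on the decomposition $H^{(r)}=A+B$, where $B=H_{-s}+H_s\ge0$ and $A:=H_L^{\le r}+M+H_R^{\le r}$ with $M:=\sum_{i=1-s}^{s-1}H_i\ge0$. The point is that $H_L^{\le r}$, $M$, $H_R^{\le r}$ are supported on mutually disjoint blocks of spins — the interface terms $H_{\pm s}$ are exactly what was split off as $B$ — so $H_L$, $M$, $H_R$ commute, and therefore $A$ (built from $H_L^{\le r}$, which is a function of $H_L$, together with $M$ and $H_R^{\le r}$) commutes with $H-B=H_L+M+H_R$, hence with its spectral projection $P_t$ and with $Q$. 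For (i): $QAP_t=AQP_t=0$, so $QH^{(r)}P_t=QBP_t$ and thus $\langle Q\phi,H^{(r)}P_t\phi\rangle=\langle\phi,QBP_t\phi\rangle=\langle\phi,(1-P_t)BP_t\phi\rangle$. For (ii): since $B\ge0$ it is enough to show $QAQ\ge\min\{r,t\}\,Q$; work in a common eigenbasis of the commuting operators $H_L,M,H_R$ with eigenvalues $\lambda_L,\lambda_M,\lambda_R\ge0$, so the range of $Q$ is spanned by eigenvectors with $\lambda_L+\lambda_M+\lambda_R>t$, on which $A$ acts as the scalar $\min\{\lambda_L,r\}+\lambda_M+\min\{\lambda_R,r\}$. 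If $\lambda_L\ge r$ or $\lambda_R\ge r$ this is $\ge r$; otherwise it equals $\lambda_L+\lambda_M+\lambda_R>t$; in either case it is $\ge\min\{r,t\}$, which is (ii).

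Substituting (i) and (ii) into $\epsilon\|Q\phi\|^2=\langle Q\phi,H^{(r)}P_t\phi\rangle+\langle Q\phi,H^{(r)}Q\phi\rangle$ yields $\langle\phi,(1-P_t)BP_t\phi\rangle=\epsilon\|Q\phi\|^2-\langle Q\phi,H^{(r)}Q\phi\rangle$, a real number bounded above by $(\epsilon-\min\{r,t\})\|Q\phi\|^2\le0$, so $|\langle\phi^{(r)},(1-P_t)BP_t\phi^{(r)}\rangle|\ge(\min\{r,t\}-\epsilon^{(r)})\|(1-P_t)\phi^{(r)}\|^2$, which is the assertion. The one step needing genuine care is the energy-barrier estimate (ii): one must check that truncating $H_L$ and $H_R$ at level $r$ rather than at $t$ does not spoil the lower bound on the range of $Q$, and this is precisely what the two-case split above settles; everything else is bookkeeping with the commutation relations forced by the disjoint-support structure arranged in Section \ref{s3}.
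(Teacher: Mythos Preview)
Your proof is correct and follows essentially the same route as the paper's: both use that $H^{(r)}-B=H_L^{\le r}+M+H_R^{\le r}$ commutes with $P_t$ (so the cross term $QH^{(r)}P_t$ reduces to $QBP_t$) and that $Q(H^{(r)}-B)Q\ge\min\{r,t\}\,Q$ on the range of $Q$. The only cosmetic differences are that you pair the eigenvalue equation directly with $Q\phi$ rather than expanding the full expectation $\langle\phi,H^{(r)}\phi\rangle$, and that you spell out the two-case eigenvalue argument for the $\min\{r,t\}$ lower bound, which the paper asserts without explanation.
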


\begin{proof}
It follows from
\begin{eqnarray}
&&\epsilon^{(r)}=\langle\phi^{(r)},H^{(r)}\phi^{(r)}\rangle\nonumber\\
&&=\langle\phi^{(r)},(1-P_t)H^{(r)}(1- P_t)\phi^{(r)}\rangle+\langle\phi^{(r)},P_tH^{(r)}\phi^{(r)}\rangle+\langle\phi^{(r)},(1-P_t)H^{(r)}P_t\phi^{(r)}\rangle\nonumber\\
&&\ge\langle\phi^{(r)},(1-P_t)(H^{(r)}-B)(1- P_t)\phi^{(r)}\rangle+\epsilon^{(r)}\|P_t\phi^{(r)}\|^2\nonumber\\
&&+\langle\phi^{(r)},(1-P_t)(H^{(r)}-B)P_t\phi^{(r)}\rangle+\langle\phi^{(r)},(1-P_t)BP_t\phi^{(r)}\rangle\nonumber\\
&&\ge\min\{r,t\}\|(1- P_t)\phi^{(r)}\|^2+\epsilon^{(r)}(1-\|(1-P_t)\phi^{(r)}\|^2)-|\langle\phi^{(r)},(1-P_t)BP_t\phi^{(r)}\rangle|.
\end{eqnarray}
\end{proof}

Suppose $\epsilon^{(r)}=O(1)$ and $r\ge\epsilon^{(r)}+100=O(1)$.

\begin{lemma} \label{l6}
\begin{equation}
\|(1-P_t)\phi^{(r)}\|\le2^{-\Omega(t)}.
\end{equation}
\end{lemma}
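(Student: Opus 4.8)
The plan is to bootstrap Lemma~\ref{l5}, which used once only gives $\|(1-P_t)\phi^{(r)}\|^2\le\|B\|/(\min\{r,t\}-\epsilon^{(r)})\le 2/100$ for $t\ge r$ (since $\min\{r,t\}=r$, $r-\epsilon^{(r)}\ge 100$, and $|\langle\phi^{(r)},(1-P_t)BP_t\phi^{(r)}\rangle|\le\|B\|$) --- a bound with no decay in $t$. Write $w(t):=\|(1-P_t)\phi^{(r)}\|$, non-increasing, extended by $w(\tau):=1$ for $\tau<r$ (only $t=O(n)$ matters, since $w(t)=0$ once $t\ge\|H-B\|$). I would prove that the ansatz $w(\tau)\le C\mu^{\tau}$ for all $\tau\ge r$ (absolute constants $C$, $\mu<1$) is self-consistent under Lemma~\ref{l5}, so that iterating the induced map to its fixed point gives $w(t)\le 2^{-\Omega(t)}$.

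The structural inputs are: (i) $H-B=H_L+\sum_{i=1-s}^{s-1}H_i+H_R$ is a sum of three commuting Hamiltonians on the disjoint spin blocks $[-n,-s]$, $[1-s,s]$, $[s+1,n+1]$, so every spectral projector of $H-B$ --- in particular each $P_u$ --- is a sum of tensor-product projectors across this partition; (ii) each boundary term touches only the last spin of one block and the first spin of the adjacent block, so $[H-B,H_{-s}]=[H'_{-s-1},H'_{-s}]+[H'_{1-s},H'_{-s}]$ and likewise for $H_s$, whence $\|[H-B,B]\|=O(1)$; (iii) $0\le B$ and $\|B\|\le 2$; (iv) $B\phi^{(r)}=\epsilon^{(r)}\phi^{(r)}-(H^{(r)}-B)\phi^{(r)}$ from the eigenvalue equation, with $H^{(r)}-B\ge 0$ commuting with every $P_u$.

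The engine is an energy-``leakage'' estimate: a vector of $(H-B)$-energy at most $E$ is sent by $B$ to one whose weight above energy $E+\Delta$ is at most $2^{-\Omega(\Delta)}$, with rate uniform in $s$ and $n$. This uses (i)--(iii): $B$ acts on $O(1)$ spins, commutes with one of the two outer blocks entirely, and has an $O(1)$ commutator with $H-B$, so $\beta\mapsto e^{\beta(H-B)}Be^{-\beta(H-B)}=\sum_k\frac{\beta^k}{k!}\,\mathrm{ad}^k_{H-B}(B)$ stays bounded by $O(1)$ in a fixed strip, which (conjugating $1-P_{E+\Delta}$ and $P_E$ by $e^{\pm\beta(H-B)}$) forces the exponential suppression --- the ``detachability'' mechanism of Hastings~\cite{Has07} and Section~6 of \cite{AKLV13}, here treating the two interfaces one at a time so that only one block's energy moves per step; (iv) gives the complementary $\|B\phi^{(r)}\|=O(1)$. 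Granting this, split $P_t$ inside $\langle(1-P_t)\phi^{(r)},BP_t\phi^{(r)}\rangle$ into $P_r$ plus narrow windows $Q_j=P_{u_{j+1}}-P_{u_j}$ covering $(r,t]$. The topmost window contributes $\le\|B\|\,w(t)\,\|Q_{\mathrm{top}}\phi^{(r)}\|\le\|B\|\,w(t)\,w(t-\delta)$, i.e.\ (dividing Lemma~\ref{l5} by $w(t)$) at most $(\|B\|/100)\,w(t-\delta)$ with $\|B\|/100<1$: the contraction. By leakage, window $Q_j$ contributes at most a constant times $w(t)\,2^{-\Omega(t-u_{j+1})}w(u_j)$, and $P_r$ at most a constant times $w(t)\,2^{-\Omega(t)}$; under the ansatz these geometric sums are $2^{-\Omega(t)}$. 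Altogether $w(t)\le(\|B\|/100)\,w(t-\delta)+2^{-\Omega(t)}$, which telescopes to the claim.

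The main obstacle is the leakage estimate --- above all making its exponential rate an absolute constant independent of $s$ and $n$. A crude first- or second-moment bound localizes $BP_E\phi^{(r)}$ only to $(H-B)$-energy scale $O(t)$, not $O(\delta)$, which would merely reproduce the useless $O(1)$ bound; one genuinely needs that $B$ is a sum of two $O(1)$-local interface terms and that $P_u$ factorizes across the three blocks. Once the leakage bound and a contraction factor below $1$ are in hand, closing the bootstrap and deducing the lemma as stated is routine.
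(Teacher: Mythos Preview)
Your approach is essentially the same as the paper's: split $P_t$ into energy windows, apply Lemma~\ref{l5}, invoke an exponential operator-norm leakage bound $\|(1-P_{t_j})BP_{t_i}\|\le Ce^{-c(t_j-t_i)}$, and close by induction on the window index. The only differences are cosmetic: the paper cites the leakage bound directly as Lemma~6.6(2) of \cite{AKLV13} (with $C=4$, $c=1/8$) and runs the induction explicitly as $\|(1-P_{t_i})\phi^{(r)}\|\le 2^{-i}$ along $t_i=t_0+ci$ with $t_0=\epsilon^{(r)}+100$, whereas you sketch the leakage bound via boundedness of $e^{\beta(H-B)}Be^{-\beta(H-B)}$ in a strip (which indeed follows from 1D locality of $H-B$ via the commutator expansion) and phrase the induction as a self-consistent ansatz---your ``topmost-window contraction plus geometric tail'' is exactly what the paper's estimate $\sum_{i=0}^j e^{(t_i-t_j)/8}2^{-i}/10\le 2^{-j}$ encodes.
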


\begin{proof}
Let $t_0=\epsilon^{(r)}+100=O(1)$. We show that there exists $c=O(1)$ such that 
\begin{equation} \label{ind}
\|(1-P_{t_i})\phi^{(r)}\|\le2^{-i}
\end{equation}
for $t_i=t_0+ci$. The proof is an induction on $i$ with fixed $r$. Clearly, (\ref{ind}) holds for $i=0$. Suppose (\ref{ind}) holds for $i=0,1,\ldots,j-1$. Let $P_{t_{-1}}=0$ for notational convenience. Lemma \ref{l5} implies
\begin{eqnarray}
&&\|(1-P_{t_j})\phi^{(r)}\|^2\le|\langle\phi^{(r)},(1-P_{t_j})BP_{t_j}\phi^{(r)}\rangle|/(\min\{r,t_j\}-\epsilon^{(r)})\nonumber\\
&&\le\left|\left\langle\phi^{(r)},(1-P_{t_j})B\sum_{i=0}^j(P_{t_i}-P_{t_{i-1}})\phi^{(r)}\right\rangle\right|/100\nonumber\\
&&\le\|(1-P_{t_j})\phi^{(r)}\|\sum_{i=0}^j\|(1-P_{t_j})B(P_{t_i}-P_{t_{i-1}})\|\|(P_{t_i}-P_{t_{i-1}})\phi^{(r)}\|/100\nonumber\\
&&\Rightarrow\|(1-P_{t_j})\phi^{(r)}\|\le\sum_{i=0}^j\|(1-P_{t_j})BP_{t_i}\|\|(1-P_{t_{i-1}})\phi^{(r)}\|/100\le\sum_{i=0}^je^{(t_i-t_j)/8}2^{-i}/10,
\end{eqnarray}
where we have used the induction hypothesis (\ref{ind}) and the inequality $\|(1-P_{t_j})BP_{t_i}\|\le4e^{(t_i-t_j)/8}$ (Lemma 6.6(2) in \cite{AKLV13}). Hence (\ref{ind}) holds for $i=j$ by setting $c=16\ln2$.
\end{proof}

Let $\Phi^{(t)}:=P_t\phi^{(t)}/\|P_t\phi^{(t)}\|$.

\begin{lemma} \label{l7}
\begin{equation}
\langle\Phi^{(t)},H\Phi^{(t)}\rangle\le\epsilon^{(t)}+2^{-\Omega(t)}.
\end{equation}
\end{lemma}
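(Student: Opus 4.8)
The plan is to leverage the identity $HP_t=H^{(t)}P_t$ from \eqref{Pt} together with the fact, supplied by Lemma~\ref{l6}, that $\phi^{(t)}$ lies almost entirely in the range of $P_t$. Since $\Phi^{(t)}=P_t\phi^{(t)}/\|P_t\phi^{(t)}\|$ is supported on $\mathrm{ran}\,P_t$, we may freely replace $H$ by $H^{(t)}$ when computing $\langle\Phi^{(t)},H\Phi^{(t)}\rangle$, and then exploit that $\phi^{(t)}$ is an exact eigenvector of $H^{(t)}$.

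Concretely, I would first write
\[
\langle\Phi^{(t)},H\Phi^{(t)}\rangle=\frac{\langle P_t\phi^{(t)},HP_t\phi^{(t)}\rangle}{\|P_t\phi^{(t)}\|^2}=\frac{\langle P_t\phi^{(t)},H^{(t)}P_t\phi^{(t)}\rangle}{\|P_t\phi^{(t)}\|^2},
\]
using $HP_t\phi^{(t)}=H^{(t)}P_t\phi^{(t)}$. Then substitute $P_t\phi^{(t)}=\phi^{(t)}-(1-P_t)\phi^{(t)}$ into the right-hand slot of the numerator and invoke $H^{(t)}\phi^{(t)}=\epsilon^{(t)}\phi^{(t)}$, obtaining
\[
\langle P_t\phi^{(t)},H^{(t)}P_t\phi^{(t)}\rangle=\epsilon^{(t)}\|P_t\phi^{(t)}\|^2-\langle P_t\phi^{(t)},H^{(t)}(1-P_t)\phi^{(t)}\rangle,
\]
where I used $\langle P_t\phi^{(t)},\phi^{(t)}\rangle=\|P_t\phi^{(t)}\|^2$. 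Dividing by $\|P_t\phi^{(t)}\|^2$ reduces the claim to showing that the (normalized) cross term is $2^{-\Omega(t)}$. For this I would bound it by Cauchy--Schwarz, $|\langle P_t\phi^{(t)},H^{(t)}(1-P_t)\phi^{(t)}\rangle|\le\|H^{(t)}\|\,\|(1-P_t)\phi^{(t)}\|$, apply Lemma~\ref{l6} with $r=t$ (its hypotheses $\epsilon^{(t)}=O(1)$ and $t\ge\epsilon^{(t)}+100$ are exactly the standing assumptions) to get $\|(1-P_t)\phi^{(t)}\|\le 2^{-\Omega(t)}$, note that consequently $\|P_t\phi^{(t)}\|^2=1-\|(1-P_t)\phi^{(t)}\|^2\ge1/2$ once $t$ exceeds an absolute constant, and use the crude bound $\|H^{(t)}\|\le 2t+2s+1$ recorded after the definition of $H^{(t)}$. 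Putting these together yields $\langle\Phi^{(t)},H\Phi^{(t)}\rangle\le\epsilon^{(t)}+O\big((t+s)\,2^{-\Omega(t)}\big)=\epsilon^{(t)}+2^{-\Omega(t)}$.

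I do not anticipate a real obstacle: the argument is essentially bookkeeping around $HP_t=H^{(t)}P_t$ and the eigenvalue equation. The only point deserving a second glance is that the operator-norm prefactor $\|H^{(t)}\|$ grows (linearly) with the block-size parameter $s$; this is harmless because the decay constant hidden in the $\Omega(t)$ of Lemma~\ref{l6} is absolute, so the polynomial prefactor is swallowed by $2^{-\Omega(t)}$ in the range of $t$ in which the lemma is applied. (If one insisted on a prefactor-free estimate one could split $H=(H-B)+B$ and treat the two parts separately, but this refinement is not needed for the statement as written.)
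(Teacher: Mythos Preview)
Your argument is essentially the paper's, and the algebra is correct: replace $H$ by $H^{(t)}$ on $\mathrm{ran}\,P_t$, expand one copy of $P_t\phi^{(t)}$ as $\phi^{(t)}-(1-P_t)\phi^{(t)}$, use the eigenvalue equation, and bound the cross term. The one substantive difference is precisely the refinement you relegated to a parenthetical. The paper does \emph{not} bound the cross term by $\|H^{(t)}\|\cdot\|(1-P_t)\phi^{(t)}\|$; instead it observes that $P_tH^{(t)}(1-P_t)=P_tH(1-P_t)=P_tB(1-P_t)$ (since $H-B$ commutes with its spectral projector $P_t$), so the cross term is controlled by $\|B\|\le 2$ rather than by $\|H^{(t)}\|\le 2t+2s+1$. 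This yields an error $2^{-\Omega(t)}$ with no $s$-dependence.

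Your dismissal of this point is not quite right at the level of generality of the lemma. The parameters $s$ and $t$ are independent in Section~\ref{s3}; nothing in the standing hypotheses forces $s=2^{o(t)}$, so $(s+t)\,2^{-\Omega(t)}$ is not literally $2^{-\Omega(t)}$. In the later applications one indeed takes $t=\Theta(s)$ and your cruder bound would suffice there, but Lemma~\ref{l7} (and Theorem~\ref{t2} built on it) are stated and used before any such relation is fixed. So the ``optional'' splitting $H=(H-B)+B$ is exactly what the paper does, and it is what makes the lemma hold as written.
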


\begin{proof}
(\ref{Pt}) implies
\begin{eqnarray}
&&\epsilon^{(t)}=\langle\phi^{(t)},H^{(t)}\phi^{(t)}\rangle\nonumber\\
&&\ge\langle\phi^{(t)},P_tH^{(t)}P_t\phi^{(t)}\rangle+\langle\phi^{(t)},P_tH^{(t)}(1-P_t)\phi^{(t)}\rangle+\langle\phi^{(t)},(1-P_t)H^{(t)}P_t\phi^{(t)}\rangle\nonumber\\
&&=\langle\phi^{(t)},P_tHP_t\phi^{(t)}\rangle+\langle\phi^{(t)},P_tB(1-P_t)\phi^{(t)}\rangle+\langle\phi^{(t)},(1-P_t)BP_t\phi^{(t)}\rangle\nonumber\\
&&\ge\langle\phi^{(t)},P_tHP_t\phi^{(t)}\rangle-2\|BP_t\phi^{(t)}\|\cdot\|(1-P_t)\phi^{(t)}\|\ge\langle\phi^{(t)},P_tHP_t\phi^{(t)}\rangle-2^{-\Omega(t)}\nonumber\\
&&\Rightarrow\langle\Phi^{(t)},H\Phi^{(t)}\rangle\le(\epsilon^{(t)}+2^{-\Omega(t)})/\|P_t\phi^{(t)}\|^2=(\epsilon^{(t)}+2^{-\Omega(t)})/(1-2^{-\Omega(t)})=\epsilon^{(t)}+2^{-\Omega(t)}.
\end{eqnarray}
\end{proof}

\begin{remark} 
Suppose $r\ge t$. A very minor modification of the proof implies
\begin{equation}
\langle\Phi^{(r),t},H\Phi^{(r),t}\rangle\le\epsilon^{(r)}+2^{-\Omega(t)}~\mathrm{for}~\Phi^{(r),t}:=P_t\phi^{(r)}/\|P_t\phi^{(r)}\|.
\end{equation}
\end{remark}

Since the proofs of Lemmas \ref{l4}, \ref{l5}, \ref{l6}, \ref{l7} do not require an energy gap, these lemmas also hold in gapless systems. Let $G$ be the ground-state space of $H$.

\begin{lemma} \label{l3}
For any state $\psi$ with $\langle\psi,H\psi\rangle\le\epsilon_0+\varepsilon$, there exists a state $\psi_g\in G$ such that
\begin{equation}
\|\psi-\psi_g\|^2\le2\varepsilon/\epsilon.
\end{equation}
\end{lemma}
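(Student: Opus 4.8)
The plan is to run the standard spectral-gap argument: a state of energy at most $\epsilon_0+\varepsilon$ must be nearly supported on the ground-state space $G$, with its orthogonal component suppressed by the ratio $\varepsilon/\epsilon$. First I would let $P_G$ be the orthogonal projection onto $G$ and set $c:=\|P_G\psi\|$ and $s:=\|(1-P_G)\psi\|$, so that $c^2+s^2=1$. Choosing $\psi_g\in G$ to be the normalized projection $P_G\psi/c$, with the global phase of $\psi$ fixed so that $\langle\psi_g,\psi\rangle=c\ge0$ (and $\psi_g\in G$ arbitrary in the degenerate case $c=0$, which is covered by the same inequalities), I would write $\psi=c\psi_g+s\psi_\perp$ with $\psi_\perp$ a normalized vector in $G^\perp$.

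Next I would estimate the energy. Since every vector of $G$ is a ground state of $H$, $H\psi_g=\epsilon_0\psi_g$, so the cross terms vanish, $\langle\psi_g,H\psi_\perp\rangle=\epsilon_0\langle\psi_g,\psi_\perp\rangle=0$, and
\[
\langle\psi,H\psi\rangle=c^2\epsilon_0+s^2\langle\psi_\perp,H\psi_\perp\rangle\ge c^2\epsilon_0+s^2\epsilon_f=\epsilon_0+s^2\epsilon,
\]
where the inequality is the variational principle applied to $\psi_\perp$: it is orthogonal to the span of the $f$ lowest eigenstates of $H$, hence has energy at least $\epsilon_f$. Comparing with the hypothesis $\langle\psi,H\psi\rangle\le\epsilon_0+\varepsilon$ gives $s^2\le\varepsilon/\epsilon$.

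Finally I would pass from this overlap bound to the distance bound:
\[
\|\psi-\psi_g\|^2=2-2\,\mathrm{Re}\langle\psi_g,\psi\rangle=2-2c\le2(1-c^2)=2s^2\le2\varepsilon/\epsilon,
\]
using $0\le c\le1$. I do not expect a genuine obstacle here — this is a routine gap estimate — and the only points requiring a little care are the phase convention that makes $\langle\psi_g,\psi\rangle$ real and nonnegative, and the degenerate bookkeeping: the relevant gap is $\epsilon_f-\epsilon_0$ rather than $\epsilon_1-\epsilon_0=0$, and $H\psi_g=\epsilon_0\psi_g$ must be invoked for every $\psi_g\in G$, not merely for a particular eigenbasis.
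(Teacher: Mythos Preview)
Your argument is correct and essentially identical to the paper's proof: decompose $\psi=c_g\psi_g+c_e\psi_e$ with $\psi_g\in G$, $\psi_e\perp G$, use the gap to bound $c_e^2\le\varepsilon/\epsilon$, and conclude $\|\psi-\psi_g\|^2=2-2c_g\le2\varepsilon/\epsilon$. You are slightly more explicit than the paper about the phase convention and the vanishing of cross terms, but there is no substantive difference.
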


\begin{proof}
The state $\psi$ can be decomposed as
\begin{equation}
\psi=c_g\psi_g+c_e\psi_e,~c_g,c_e\ge0,~c_g^2+c_e^2=1,
\end{equation}
where $\psi_g\in G$ and $\psi_e\perp G$. Then,
\begin{eqnarray}
c_g^2\epsilon_0+c_e^2\epsilon_f\le\langle\psi,H\psi\rangle\le\epsilon_0+\varepsilon\Rightarrow c_e^2\le\varepsilon/\epsilon\Rightarrow\|\psi-\psi_g\|^2=2-2c_g\le2\varepsilon/\epsilon.
\end{eqnarray}
\end{proof}

\begin{theorem} \label{t2}
For $t\ge O(\log\epsilon^{-1})$,\\
(a) $0\le\epsilon_0-\epsilon'_{f-1}\le\epsilon_0-\epsilon'_{f-2}\le\cdots\le\epsilon_0-\epsilon'_0\le2^{-\Omega(t)}$;\\
(b) there exists $\psi_i^{(t)}\in G$ such that $\|\psi_i^{(t)}-\phi_i^{(t)}\|^2\le2^{-\Omega(t)}$ for $i=0,1,\ldots,f-1$;\\
(c) $\epsilon'\ge\epsilon/10$.
\end{theorem}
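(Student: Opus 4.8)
The plan is to prove the three parts in the order (a), (b), (c), since (b) builds on (a) and (c) on (b), and to start from the operator inequality $H^{(t)}\le H$. This holds because $H^{(t)}-H=(H_L^{\le t}-H_L)+(H_R^{\le t}-H_R)$ with $H_L^{\le t}-H_L=-(H_L-t)(1-P^{\le t}_L)\le0$, using that $H_L>t$ on the range of $1-P^{\le t}_L$ (and likewise on the right). By the min-max principle this gives $\epsilon'_k\le\epsilon_k$ for every $k$, so $\epsilon'_i\le\epsilon_i=\epsilon_0$ for $i=0,\dots,f-1$ and $\epsilon'_f\le\epsilon_f=O(1)$. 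The chain in (a) other than its last inequality is then immediate: the monotonicity is just $\epsilon'_0\le\dots\le\epsilon'_{f-1}$, and $\epsilon_0-\epsilon'_{f-1}\ge0$ is $\epsilon'_{f-1}\le\epsilon_0$. For the last inequality, apply Lemma \ref{l7} with $\phi^{(t)}=\phi_0^{(t)}$: since $\Phi^{(t)}$ is normalized and $H\ge\epsilon_0$, one gets $\epsilon_0\le\langle\Phi^{(t)},H\Phi^{(t)}\rangle\le\epsilon'_0+2^{-\Omega(t)}$.

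For (b), put $\Phi_i^{(t)}:=P_t\phi_i^{(t)}/\|P_t\phi_i^{(t)}\|$ for $i=0,\dots,f-1$. Lemma \ref{l7} gives $\langle\Phi_i^{(t)},H\Phi_i^{(t)}\rangle\le\epsilon'_i+2^{-\Omega(t)}\le\epsilon_0+2^{-\Omega(t)}$, so Lemma \ref{l3} with $\varepsilon=2^{-\Omega(t)}$ produces $\psi_i^{(t)}\in G$ with $\|\Phi_i^{(t)}-\psi_i^{(t)}\|^2\le2\cdot2^{-\Omega(t)}/\epsilon$; for $t\ge O(\log\epsilon^{-1})$ the factor $\epsilon^{-1}$ is absorbed and this is again $2^{-\Omega(t)}$. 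Meanwhile Lemma \ref{l6} gives $\|(1-P_t)\phi_i^{(t)}\|\le2^{-\Omega(t)}$, hence $\|P_t\phi_i^{(t)}\|\ge1-2^{-\Omega(t)}$ and a one-line normalization estimate gives $\|\phi_i^{(t)}-\Phi_i^{(t)}\|\le2^{-\Omega(t)}$; the triangle inequality then yields $\|\phi_i^{(t)}-\psi_i^{(t)}\|^2\le2^{-\Omega(t)}$.

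Part (c) is the crux. Let $P_G$ project onto $G$. By (b), $\|(1-P_G)\phi_i^{(t)}\|\le\|\phi_i^{(t)}-\psi_i^{(t)}\|\le2^{-\Omega(t)}$ for $i<f$, so the $f$ vectors $P_G\phi_0^{(t)},\dots,P_G\phi_{f-1}^{(t)}$ have Gram matrix $\mathcal G=I+E$ with $\|E\|\le2^{-\Omega(t)}$ — diagonal entries $1-\|(1-P_G)\phi_i^{(t)}\|^2$ and, since the $\phi_i^{(t)}$ are orthonormal, off-diagonal entries $-\langle(1-P_G)\phi_i^{(t)},(1-P_G)\phi_j^{(t)}\rangle$ — and hence they form a well-conditioned basis of the $f$-dimensional space $G$. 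Because $\phi_f^{(t)}\perp\phi_i^{(t)}$ for $i<f$, the same identity gives $|\langle P_G\phi_i^{(t)},P_G\phi_f^{(t)}\rangle|=|\langle(1-P_G)\phi_i^{(t)},(1-P_G)\phi_f^{(t)}\rangle|\le2^{-\Omega(t)}$; expanding $P_G\phi_f^{(t)}\in G$ in that basis and using $\|\mathcal G^{-1}\|\le1+2^{-\Omega(t)}$ forces $\|P_G\phi_f^{(t)}\|^2\le2^{-\Omega(t)}$. So $\phi_f^{(t)}$, and thus $\Phi_f^{(t)}:=P_t\phi_f^{(t)}/\|P_t\phi_f^{(t)}\|$ (which is $2^{-\Omega(t)}$-close to it by Lemma \ref{l6}), has almost all its weight in $G^\perp$; since $H\ge0$ and $H(1-P_G)\ge\epsilon_f(1-P_G)$, one has $\langle\Phi_f^{(t)},H\Phi_f^{(t)}\rangle\ge\epsilon_f\|(1-P_G)\Phi_f^{(t)}\|^2\ge\epsilon_f-2^{-\Omega(t)}$. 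Comparing with $\langle\Phi_f^{(t)},H\Phi_f^{(t)}\rangle\le\epsilon'_f+2^{-\Omega(t)}$ (Lemma \ref{l7} applied to $\phi_f^{(t)}$) gives $\epsilon'_f\ge\epsilon_f-2^{-\Omega(t)}$, and with $\epsilon'_0\le\epsilon_0$ this is $\epsilon'=\epsilon'_f-\epsilon'_0\ge\epsilon-2^{-\Omega(t)}\ge\epsilon/10$ for $t\ge O(\log\epsilon^{-1})$.

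The step I expect to fight hardest with is the assertion in (c) that $\phi_f^{(t)}$ is nearly orthogonal to $G$: this is exactly where exact $f$-fold degeneracy enters, and it requires the approximate basis $\{P_G\phi_i^{(t)}\}_{i<f}$ to be quantitatively well-conditioned before one inverts its Gram matrix. Concretely, every $2^{-\Omega(t)}$ error must be dominated both by $1/f=\Omega(1)$ (forcing $t$ above a large absolute constant so that $\mathcal G$ stays invertible with controlled inverse) and, after division by $\epsilon$ in the applications of Lemma \ref{l3}, by $\epsilon$ itself (forcing $t\ge O(\log\epsilon^{-1})$); keeping all suppressed constants and the $f$-dependence consistent is the bookkeeping one must not get wrong.
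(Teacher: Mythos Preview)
Your proof is correct and, for parts (a) and (b), essentially identical to the paper's: both use $H^{(t)}\le H$ (min-max) for the lower inequalities, Lemma~\ref{l7} for the last inequality in (a), and the chain Lemma~\ref{l7}~$\to$~Lemma~\ref{l3}~$\to$~Lemma~\ref{l6}~$\to$~triangle inequality for (b).

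For (c) your route is a genuine variant. The paper applies Lemma~\ref{l3} also at level $f$ to produce $\psi_f^{(t)}\in G$ with $\|\Phi_f^{(t)}-\psi_f^{(t)}\|^2\le\epsilon'/\epsilon+2^{-\Omega(t)}/\epsilon$, and then argues by contradiction: the $f+1$ orthonormal vectors $\phi_0^{(t)},\dots,\phi_f^{(t)}$ are each close to some $\psi_i^{(t)}\in G$, but $\dim G=f$ forces the $\psi_i^{(t)}$ to be linearly dependent, so the error $\epsilon'/\epsilon$ cannot be too small. You instead work directly: the Gram-matrix argument shows $\|P_G\phi_f^{(t)}\|^2\le2^{-\Omega(t)}$, and then comparing the lower bound $\langle\Phi_f^{(t)},H\Phi_f^{(t)}\rangle\ge\epsilon_f-2^{-\Omega(t)}$ with Lemma~\ref{l7} gives $\epsilon'_f\ge\epsilon_f-2^{-\Omega(t)}$. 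Your approach is cleaner in that it yields the sharper conclusion $\epsilon'\ge\epsilon-2^{-\Omega(t)}$ rather than merely $\epsilon'\ge\epsilon/10$, and it avoids the somewhat implicit contradiction step; the paper's version has the virtue of being a one-line remark once (E1)--(E3) are in hand. Both arguments are at bottom the same pigeonhole: $f+1$ nearly orthonormal vectors cannot all be nearly in an $f$-dimensional space.
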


\begin{proof}
Lemma \ref{l7} implies
\begin{eqnarray}
&&\epsilon'_0\le\epsilon'_1\le\cdots\le\epsilon'_{f-1}\le\epsilon_0\le\langle\Phi^{(t)}_0,H\Phi^{(t)}_0\rangle\le\epsilon'_0+2^{-\Omega(t)},\label{E0}\\
&&\langle\Phi^{(t)}_f,H\Phi^{(t)}_f\rangle\le\epsilon'_f+2^{-\Omega(t)}=\epsilon'_0+\epsilon'+2^{-\Omega(t)}\le\epsilon_0+\epsilon'+2^{-\Omega(t)}.
\end{eqnarray}
(a) follows from (\ref{E0}). Using Lemma \ref{l3}, there exists $\psi^{(t)}_0,\psi^{(t)}_1,\ldots,\psi^{(t)}_f\in G$ such that
\begin{equation} \label{E1}
\|\Phi^{(t)}_i-\psi^{(t)}_i\|^2\le2^{-\Omega(t)}/\epsilon=2^{-\Omega(t)+\log\epsilon^{-1}}
\end{equation}
for $i=0,1,\ldots,f-1$ and
\begin{equation} \label{E2}
\|\Phi^{(t)}_f-\psi^{(t)}_f\|^2\le\epsilon'/\epsilon+2^{-\Omega(t)}/\epsilon.
\end{equation}
Lemma \ref{l6} implies
\begin{equation} \label{E3}
\|\phi^{(t)}_i-\Phi^{(t)}_i\|^2\le2^{-\Omega(t)}.
\end{equation}
(b) follows from (\ref{E1}), (\ref{E3}) as $t\ge O(\log\epsilon^{-1})$. (c) follows from (\ref{E1}), (\ref{E2}), (\ref{E3}), because $\phi^{(t)}_0,\phi^{(t)}_1,\ldots,\phi^{(t)}_f$ are pairwise orthogonal while $\psi^{(t)}_0,\psi^{(t)}_1,\ldots,\psi^{(t)}_f$ are linearly dependent.
\end{proof}
 
\section{Approximate ground-space projection} \label{s2}

Recall that $H^{(t)}$ is the truncated Hamiltonian with the lowest energy levels $0\le\epsilon'_0\le\epsilon'_1\le\cdots$ and the corresponding (orthonormal) eigenstates $\phi_0^{(t)},\phi_1^{(t)},\ldots$. Theorem \ref{t2} implies that the lowest $f$ energy levels are nearly degenerate: $\epsilon'_0\approx\epsilon'_{f-1}$, and $\epsilon'=\epsilon'_f-\epsilon'_0$ is the energy gap. Let $G':=\mathrm{span}\{\phi_i^{(t)}|i=0,1,\ldots,f-1\}$ be the ground-state space of $H^{(t)}$. Let $R(\psi)$ denote the Schmidt rank of a state $\psi$ across the middle cut.

\begin{definition} [Approximate ground-space projection (AGSP) \cite{ALV12}]
A linear operator $A$ is a $(\Delta,D)$-AGSP if\\
(i) $A\psi=\psi$ for $\forall\psi\in G'$;\\
(ii) $A\psi\perp G'$ and $\|A\psi\|^2\le\Delta$ for $\forall\psi\perp G'$;\\
(iii) $R(A\psi)\le DR(\psi)$ for $\forall\psi$.
\end{definition}

Let $\epsilon'_\infty:=2s+2t+1$ be an upper bound on the maximum eigenvalue of $H^{(t)}$.

\begin{lemma} \label{l2}
Suppose $l^2(\epsilon'_{f-1}-\epsilon'_0)/(\epsilon'_\infty-\epsilon'_f)\le1/10$. Then there exists a polynomial $C_l$ of degree $fl$ such that\\
(i) $C_l(\epsilon'_0)=C(\epsilon'_1)=\cdots=C(\epsilon'_{f-1})=1$;\\
(ii) $C_l^2(x)\le2^{2f+4}e^{-4l\sqrt{\epsilon'/\epsilon'_\infty}}$ for $\epsilon'_f\le x\le\epsilon'_\infty$.
\end{lemma}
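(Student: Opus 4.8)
The plan is to construct $C_l$ from the degree-$l$ Chebyshev polynomial of the first kind $T_l$, precomposed with an affine rescaling that carries $[\epsilon'_f,\epsilon'_\infty]$ onto $[-1,1]$, and then to ``fine tune'' it by a Lagrange-type product so that it takes the value $1$ at all $f$ low-lying eigenvalues $\epsilon'_0,\ldots,\epsilon'_{f-1}$ simultaneously; the price of the fine tuning is that the degree grows from $l$ (the non-degenerate case $f=1$ of \cite{AKLV13}) to $fl$.

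Concretely, let $\phi$ be the affine map with $\phi(\epsilon'_f)=1$ and $\phi(\epsilon'_\infty)=-1$, so $\phi$ carries $[\epsilon'_f,\epsilon'_\infty]$ bijectively onto $[-1,1]$ while the low eigenvalues land in the cluster $\phi(\epsilon'_i)=1+2(\epsilon'_f-\epsilon'_i)/(\epsilon'_\infty-\epsilon'_f)>1$ ($i=0,\ldots,f-1$) just to the right of $1$. Put $g:=T_l\circ\phi$ and define
\[
C_l(x)=1-\prod_{i=0}^{f-1}\left(1-\frac{g(x)}{g(\epsilon'_i)}\right).
\]
Since $\deg g=l$ and $g(\epsilon'_i)=T_l(\phi(\epsilon'_i))>1$, we have $\deg C_l=fl$; property (i) is immediate, since at $x=\epsilon'_j$ ($0\le j\le f-1$) the $i=j$ factor equals $1-g(\epsilon'_j)/g(\epsilon'_j)=0$, so the product vanishes and $C_l(\epsilon'_j)=1$. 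For (ii), on $[\epsilon'_f,\epsilon'_\infty]$ one has $|g(x)|=|T_l(\phi(x))|\le 1$ because $\phi(x)\in[-1,1]$, whereas for each $i<f$ the value $g(\epsilon'_i)=\cosh\bigl(l\,\mathrm{arccosh}\,\phi(\epsilon'_i)\bigr)\ge\frac12\exp\bigl(l\,\mathrm{arccosh}\,\phi(\epsilon'_i)\bigr)$ is exponentially large. Hence, with $M:=\min_{0\le i<f}g(\epsilon'_i)>1$, we get $|C_l(x)|\le(1+1/M)^f-1\le(2^f-1)/M$ on $[\epsilon'_f,\epsilon'_\infty]$, and squaring and inserting the lower bound on $M$ yields the bound in (ii).

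The one step requiring care --- and the main obstacle --- is the lower bound on $M=\min_{i<f}g(\epsilon'_i)$ with the right exponent. The extremal eigenvalue is $\epsilon'_{f-1}$, the closest to $\epsilon'_f$, with $\phi(\epsilon'_{f-1})-1=2(\epsilon'_f-\epsilon'_{f-1})/(\epsilon'_\infty-\epsilon'_f)$ and $\epsilon'_f-\epsilon'_{f-1}=\epsilon'-(\epsilon'_{f-1}-\epsilon'_0)$, and this is exactly where the hypothesis is used: it bounds the cluster width $\epsilon'_{f-1}-\epsilon'_0$ by $(\epsilon'_\infty-\epsilon'_f)/(10l^2)$, i.e.\ by a $1/(10\gamma^2)$ fraction of $\epsilon'$ with $\gamma:=l\sqrt{\epsilon'/\epsilon'_\infty}$. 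Together with $\epsilon'_f=O(1)\ll\epsilon'_\infty$ and the elementary lower bound $\mathrm{arccosh}(1+u)\ge\sqrt{2u}\,(1-u)$ for $u\ge0$, this makes $l\,\mathrm{arccosh}\,\phi(\epsilon'_{f-1})$ at least $2\gamma$ up to an $O(1/\gamma)$ correction, so $M\ge\frac12\exp\bigl(2\gamma-O(1/\gamma)\bigr)$, which suffices for (ii) once $\gamma$ exceeds a small absolute constant. In the complementary range $\gamma=O(1)$ the right-hand side of (ii) is $\Theta(2^{2f})$, and the crude estimate $|C_l(x)|\le 1+(1+1/M)^f\le 1+2^f$ (valid since $M>1$) already does the job. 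Matching the exact numerical constant $2^{2f+4}$ then amounts to routine bookkeeping of the binomial terms in $(1+1/M)^f-1$ and of the lower-order corrections in the $\mathrm{arccosh}$ estimate, for which the stated constant leaves ample room; I anticipate no essential difficulty there.
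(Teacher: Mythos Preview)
Your proposal is correct, and in fact constructs the \emph{same} polynomial as the paper, only written more transparently. The paper sets $C_l=L(S_l(\cdot))/S_l(\epsilon'_0)$ where $S_l=T_l\circ g$ is your $g$ and $L$ is the degree-$f$ interpolant with $L(0)=0$, $L(S_l(\epsilon'_i))=S_l(\epsilon'_0)$; by uniqueness of interpolation this $L(y)/S_l(\epsilon'_0)$ is exactly $1-\prod_{i=0}^{f-1}(1-y/S_l(\epsilon'_i))$, which is your formula.

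Where the two genuinely differ is the analysis on $[\epsilon'_f,\epsilon'_\infty]$. The paper bounds $|L(y)|$ for $|y|\le1$ by locating the $f-1$ critical points of $L$ between the nodes via Rolle's theorem, writing $L'(y)=a_1\prod_i(1-y/\xi_i)$, and integrating to control $a_1$; the hypothesis enters through the logarithmic-derivative bound $T_l'/T_l\le l^2$, which yields $S_l(\epsilon'_{f-1})/S_l(\epsilon'_0)\ge4/5$ and hence $a_1\le5f/4$. Your product form short-circuits all of this: the estimate $|C_l|\le(1+1/M)^f-1\le(2^f-1)/M$ is immediate, and the hypothesis is used only to lower-bound $M=g(\epsilon'_{f-1})$ directly. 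Your route is more elementary and loses nothing. Two small cleanups: for the small-$\gamma$ regime your own bound $(2^f-1)/M<2^f$ (since $M>1$) already gives $C_l^2<2^{2f}\le2^{2f+4}e^{-4\gamma}$ whenever $\gamma\le\ln2$, so the cruder $1+2^f$ is unnecessary; and the inequality $\mathrm{arccosh}(1+u)\ge2\sqrt{u/(2+u)}$ (equivalently $y\ge2\tanh(y/2)$, which is what the paper uses) is sharper than your $\sqrt{2u}(1-u)$ and removes any need to invoke $\epsilon'_f=O(1)$.
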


\begin{proof}
The Chebyshev polynomial of the first kind of degree $l$ is defined as
\begin{equation}
T_l(x)=\cos(l\arccos x)=\cosh(ly),~y:=\mathrm{arccosh}x.
\end{equation}
By definition, $|T_l(x)|\le 1$ for $|x|\le1$. For $x\ge1$, $T_l(x)$ is monotonically increasing function of $x$, and
\begin{equation}
T_l(x)\ge e^{ly}/2\ge e^{2l\tanh(y/2)}/2=e^{2l\sqrt{(x-1)/(x+1)}}/2,~\frac{T'_l(x)}{T_l(x)}=\frac{l\tanh(ly)}{\sinh y}\le\frac{l(l y)}{y}=l^2.
\end{equation}
Let $g(x):=(\epsilon'_\infty+\epsilon'_f-2x)/(\epsilon'_\infty-\epsilon'_f)$ such that $g(\epsilon'_\infty)=-1$ and $g(\epsilon'_f)=1$. Define $S_l(x)=T_l(g(x))$ as a polynomial of degree $l$. Clearly, $|S_l(x)|\le1$ for $\epsilon'_f\le x\le\epsilon'_\infty$ and
\begin{equation}
S_l(\epsilon'_0)=T_l(g(\epsilon'_0))\ge e^{2l\sqrt{(g(\epsilon'_0)-1)/(g(\epsilon'_0)+1)}}/2\ge e^{2\l\sqrt{\epsilon'/\epsilon'_\infty}}/2.
\end{equation}
There exists $\epsilon'_0\le\xi\le\epsilon'_{f-1}$ such that
\begin{eqnarray}
&&S_l(\epsilon'_{f-1})=S_l(\epsilon'_0)+(\epsilon'_{f-1}-\epsilon'_0)S'_l(\xi)\ge S_l(\epsilon'_0)(1+(\epsilon'_{f-1}-\epsilon'_0)T'_l(g(\xi))g'(\xi)/T_l(g(\xi)))\nonumber\\
&&\Rightarrow S_l(\epsilon'_{f-1})/S_l(\epsilon'_0)\ge1-2l^2(\epsilon'_{f-1}-\epsilon'_0)/(\epsilon'_\infty-\epsilon'_f)\ge4/5.
\end{eqnarray}
Assume without loss of generality that $\epsilon'_0,\epsilon'_1,\ldots,\epsilon'_{f-1}$ are pairwise distinct. Let $L(x)=\sum_{i=1}^fa_ix^i$ be the Lagrange interpolation polynomial of degree $f$ such that $L(0)=0$ and $L(S_l(\epsilon'_0))=L(S_l(\epsilon'_1))=\cdots=L(S_l(\epsilon'_{f-1}))=S_l(\epsilon'_0)$. For each $i=1,2,\ldots,f-1$, there exists $S_l(\epsilon'_{i-1})>\xi_i>S_l(\epsilon'_i)$ such that $L'(\xi_i)=0$. Then,
\begin{equation}
L'(x)=a_1\prod_{i=1}^{f-1}(1-x/\xi_i).
\end{equation}
Clearly, $a_1>0$ and $L'(x)>0$ for $x<S_l(\epsilon'_{f-1})$. Hence,
\begin{eqnarray}
&&S_l(\epsilon'_0)=L(S_l(\epsilon'_{f-1}))=\int_0^{S_l(\epsilon'_{f-1})}L'(x)\mathrm{d}x\ge a_1\int_0^{S_l(\epsilon'_{f-1})}(1-x/S_l(\epsilon'_{f-1}))^{f-1}\mathrm{d}x\nonumber\\
&&=a_1S_l(\epsilon'_{f-1})/f\Rightarrow a_1\le5f/4.
\end{eqnarray}
For $|x|\le1$,
\begin{equation}
\xi_1>\xi_2>\cdots>\xi_{f-1}>S_l(\epsilon'_f)=1\Rightarrow|L'(x)|\le a_1(1+|x|)^{f-1}\Rightarrow|L(x)|\le2^{f+1}.
\end{equation}
Finally, $C_l(x):=L(S_l(x))/S_l(\epsilon'_0)$ is a polynomial of degree $fl$.
\end{proof}

\begin{lemma} [Lemma 4.2 in \cite{AKLV13}] \label{l8}
For any polynomial $p_l$ of degree $l\le s^2$ and any $t,\psi$,
\begin{equation}
R(p_l(H^{(t)})\psi)\le l^{O(\sqrt l)}R(\psi).
\end{equation}
\end{lemma}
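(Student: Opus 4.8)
The plan is to reproduce the argument of \cite[Lemma 4.2]{AKLV13}. Expand $p_l(H^{(t)})=\sum_{k=0}^l c_k (H^{(t)})^k$ and then each power of $H^{(t)}=H_L^{\le t}+H_{-s}+\cdots+H_s+H_R^{\le t}$ into a sum of ordered products (``monomials'') of at most $l$ of the $2s+3$ building blocks $H_L^{\le t},H_{-s},\ldots,H_s,H_R^{\le t}$. Since the Schmidt rank of a sum of operators is at most the sum of the Schmidt ranks, it suffices to (i) bound the Schmidt rank that a single monomial can produce across the middle cut, and (ii) bound the number of monomials that are \emph{distinct} once one discards the factors acting entirely on one side of the cut (such factors leave the Schmidt rank of any image unchanged, since every block except $H_0$ is supported strictly to the left or strictly to the right, and left-supported blocks commute with right-supported ones).

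For (i): pushing all one-sided factors of a monomial $w$ outward, only $H_0$ genuinely couples the two sides, so the entangling action of $w$ is confined to a window of width $O(g)$ around the cut, giving $R(w\psi)\le d^{O(g)}R(\psi)$, where $g$ is the reach of $w$ -- how far, on its shorter side, the connected piece of $\mathrm{supp}(w)$ through the cut extends. The crux is a length-versus-reach tradeoff: a product of nearest-neighbour terms that builds up Schmidt rank $d^{\Omega(g)}$ at the cut must use $\Omega(g^2)$ factors, because accumulating the $j$th ``layer'' of cross-cut entanglement requires a sub-product that runs out to distance $\sim j$ and back, costing $\Omega(j)$ factors. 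Hence $g=O(\sqrt l)$ and any degree-$\le l$ monomial raises the Schmidt rank by at most $d^{O(\sqrt l)}=l^{O(\sqrt l)}$. (The hypothesis $l\le s^2$ enters precisely here: $\sqrt l\le s$ guarantees that the window $[-s,s+1]$ containing the $H_i$ is wide enough that the trivial ceiling $g\le s$ is never the operative constraint.)

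For (ii): the same tradeoff shows that a monomial whose entangling part occupies a window of width $w_0$ around the cut spends $\Omega(w_0^2)$ of its length budget, and that part is an operator on $O(w_0)$ qudits; summing $d^{O(w_0)}$ over $w_0=O(\sqrt l)$ leaves only $l^{O(\sqrt l)}$ inequivalent monomials of degree $\le l$. Putting (i) and (ii) together (and absorbing the $l+1$ powers) gives $R(p_l(H^{(t)})\psi)\le l^{O(\sqrt l)}R(\psi)$. I expect the real difficulty to be the bookkeeping: verifying that pushing the one-sided factors past one another genuinely confines a monomial's entangling action to an $O(\sqrt l)$-wide window, and that distinct monomials with the same entangling part collapse in the sum rather than merely having small individual Schmidt rank -- this is the technical content of \cite{AKLV13}, which we invoke.
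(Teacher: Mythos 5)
The paper gives no proof of this lemma at all---it is imported verbatim as Lemma 4.2 of \cite{AKLV13}---so your closing move of simply invoking that reference is exactly what the paper does, and the statement is used here as a black box. Your accompanying sketch of the $\sqrt{l}$ mechanism (building cross-cut Schmidt rank $d^{\Omega(g)}$ forces the cut at distance $r$ to be straddled $\Omega(g-r)$ times for each $r\le g$, hence $\Omega(g^2)$ factors) is a fair heuristic account of why the cited bound holds, though the monomial-grouping step you flag in (ii)---and the fact that one-sided factors on the \emph{same} side at adjacent positions do not commute, so ``pushing them outward'' is not free---is indeed where the real work in \cite{AKLV13} lies.
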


Let $l=s^2/f$ and $t=\Omega(s)$. The assumption 
\begin{equation}
1/10\ge l^2(\epsilon'_{f-1}-\epsilon'_0)/(\epsilon'_\infty-\epsilon'_f)=O(s^42^{-\Omega(t)}/(s+t))=O(s^32^{-\Omega(s)})
\end{equation}
is satisfied with sufficiently large $s>O(1)$. Lemmas \ref{l2}, \ref{l8} imply a $(\Delta,D)$-AGSP $A=C_l(H^{(t)})$ for $H^{(t)}$ with
\begin{equation}
\Delta=2^{2f+4}e^{-4l\sqrt{\epsilon'/\epsilon'_\infty}}=2^{-\Omega(s^2\sqrt{\epsilon/t})},~D=(s^2)^{O(\sqrt{s^2})}=s^{O(s)}.
\end{equation}
In particular, the condition
\begin{equation}
1/100\ge\Delta D^2=2^{-\Omega(s^2\sqrt{\epsilon/t})}s^{O(s)}\Rightarrow1/100\ge\Delta D
\end{equation}
can be satisfied by fixing $t=t_0=\Theta(s_0)$ and $s=s_0=\tilde O(\epsilon^{-1})$ so that $\Delta=2^{-\tilde\Omega(\epsilon^{-1})}$ and $D=2^{\tilde O(\epsilon^{-1})}$.

\section{Area law} \label{s4}

Hereafter $f=2$ is assumed for ease of presentation. It should be clear that a very minor modification of the proof works for any $f=O(1)$. Suppose $s=s_0$ and $t=t_0$ as given above so that $A$ is a $(\Delta,D)$-AGSP for $H^{(t_0)}$ with $\Delta D^2\le1/100$. Recall that $\phi_0^{(t_0)},\phi_1^{(t_0)}$ are the lowest two eigenstates and $G'=\mathrm{span}\{\phi_0^{(t_0)},\phi_1^{(t_0)}\}$ is the ground-state space of $H^{(t_0)}$.

\begin{lemma} \label{ls}
There exist $\varphi_0,\varphi_1\in G'$ and $\psi_0,\psi'_0$ such that
(i) $\varphi_0\perp\varphi_1$;
(ii) $|\langle\varphi_0,\psi_0\rangle|^2\ge24/25$;
(iii) $R(\psi_0)=2^{\tilde O(\epsilon^{-1})}$;
(iv) $|\langle\varphi_1,\psi'_0\rangle|^2\ge24/25$;
(v) $R(\psi'_0)=2^{\tilde O(\epsilon^{-1})}$.
\end{lemma}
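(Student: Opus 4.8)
\emph{Overview.} The plan is to mimic the AGSP bootstrap of \cite{ALV12,AKLV13}: run it once to extract $\varphi_0$ together with a low‑Schmidt‑rank approximant $\psi_0$, and then, after ``compressing $\psi_0$ out'', run it a second time inside the line $G'\cap\varphi_0^\perp$ to extract $\varphi_1$ and $\psi'_0$ (recall $\dim G'=f=2$). Write $P_W$ for the orthogonal projection onto a subspace $W$. The one elementary input, used twice, is: if $\delta_W:=\max\{\|P_W\chi\|^2:\chi\text{ a product state}\}$, then every $\chi$ with $R(\chi)\le k$ satisfies $\|P_W\chi\|^2\le k\,\delta_W\,\|\chi\|^2$ (expand $\chi$ in Schmidt form across the cut, then triangle inequality and Cauchy--Schwarz). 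I also use a harmless strengthening of the parameter choice of Section~\ref{s2}: since the ratio $\log(1/\Delta)/\log D=\tilde\Omega\big((s_0\epsilon)^{1/2}/\log s_0\big)$ can be pushed above any prescribed constant by enlarging the polylogarithmic slack in $s_0$ (keeping $s_0=\tilde O(\epsilon^{-1})$), I will assume $\Delta D^3\le c_0$ for a small absolute constant $c_0$, in addition to $\Delta=2^{-\tilde\Omega(\epsilon^{-1})}$ and $D=2^{\tilde O(\epsilon^{-1})}$.

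\emph{First pass.} Take a product state $\psi$ attaining $\delta:=\delta_{G'}>0$. Since $A$ fixes $G'$ pointwise and maps $G'^\perp$ into itself by a factor $\Delta$, one has $P_{G'}(A\psi)=P_{G'}\psi$ and $\|A\psi\|^2\le\delta+\Delta$; combined with the input for $k=D$ this gives $\delta\le D\delta(\delta+\Delta)$, i.e. $\delta\ge 1/D-\Delta\ge(1-c_0)/D$. Put $\psi_0:=A\psi/\|A\psi\|$; then $R(\psi_0)\le D=2^{\tilde O(\epsilon^{-1})}$, which is (iii), and $q:=\|P_{G'}\psi_0\|^2=\delta/\|A\psi\|^2\ge\delta/(\delta+\Delta)\ge 1-\Delta/\delta\ge 1-O(c_0)\ge 24/25$ for $c_0$ small. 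Set $\varphi_0:=P_{G'}\psi_0/\|P_{G'}\psi_0\|\in G'$; then $|\langle\varphi_0,\psi_0\rangle|=\|P_{G'}\psi_0\|=\sqrt q$, giving (ii). Let $\varphi_1\in G'$ be the unit vector orthogonal to $\varphi_0$, giving (i); and since $P_{G'}\psi_0=\sqrt q\,\varphi_0$ while $\psi_0-P_{G'}\psi_0\in G'^\perp$, we have $\langle\psi_0,\varphi_1\rangle=0$.

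\emph{Compression and second pass.} Let $P_{\psi_0}$ be the rank‑one projection onto $\mathrm{span}\{\psi_0\}$ and set $\tilde A:=A(1-P_{\psi_0})$. Then $\tilde A\varphi_1=A\varphi_1=\varphi_1$ because $\varphi_1\perp\psi_0$ and $\varphi_1\in G'$. For a unit $\chi\perp\varphi_1$: $R\big((1-P_{\psi_0})\chi\big)\le R(\chi)+R(\psi_0)$; and writing $\psi_0=\sqrt q\,\varphi_0+\sqrt{1-q}\,\psi_{0,e}$ with $\psi_{0,e}\in G'^\perp$, a one‑line computation shows $P_{G'}\big((1-P_{\psi_0})\chi\big)$ is a multiple of $\varphi_0$ of norm $\le 2\sqrt{1-q}$, whence $\langle\varphi_1,\tilde A\chi\rangle=0$ and $\|\tilde A\chi\|^2\le 4(1-q)+\Delta=:\Delta_2$. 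Thus $\tilde A$ behaves, on product states and relative to the line $\mathrm{span}\{\varphi_1\}$, as a $(\Delta_2,D')$‑AGSP with $D':=D(1+R(\psi_0))\le 2D^2$; moreover $\Delta_2=O(\Delta D)=2^{-\tilde\Omega(\epsilon^{-1})}$, $D'=2^{\tilde O(\epsilon^{-1})}$, and $\Delta_2D'=O(\Delta D^3)=O(c_0)\le 1/100$. Now rerun the first pass with $(A,G')$ replaced by $(\tilde A,\mathrm{span}\{\varphi_1\})$, applying $\tilde A$ only to product states: if $\tau$ maximizes $c^2:=|\langle\varphi_1,\tau\rangle|^2$ over product states ($c^2>0$ since $\varphi_1\ne0$), the input gives $c^2\ge(1-\Delta_2D')/D'\ge 0.99/D'$, and $\psi'_0:=\tilde A\tau/\|\tilde A\tau\|$ has $R(\psi'_0)\le D'=2^{\tilde O(\epsilon^{-1})}$, which is (v), and $|\langle\varphi_1,\psi'_0\rangle|^2=c^2/\|\tilde A\tau\|^2\ge c^2/(c^2+\Delta_2)\ge 1-\Delta_2D'/0.99\ge 24/25$, which is (iv).

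\emph{The crux.} The only step that really needs care is the construction of $\tilde A$: one cannot simply rerun the bootstrap for $\varphi_1$, because $A$ is the identity on all of $G'$ (in particular on $\varphi_0$) and hence not contracting near $\varphi_0$. Projecting $\psi_0$ out repairs this, but for $\tilde A$ to remain an AGSP with $\Delta_2 D'$ below an absolute constant one needs both that $\psi_0$ approximates $\varphi_0$ to accuracy $O(\Delta D)$ (so the first pass must already produce a \emph{very} good $\psi_0$) and that the rank blow‑up $D'\sim D\cdot R(\psi_0)\sim D^2$ is absorbed --- which is precisely what the cost‑free strengthening $\Delta D^3\le c_0$ provides. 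For general $f=O(1)$ the same idea is iterated $f-1$ times, compressing out $\psi_0,\psi'_0,\dots$ in succession and peeling off one basis vector of $G'$ at each stage; the constants degrade only by $f$‑dependent factors, so all bounds remain $2^{\tilde O(\epsilon^{-1})}$.
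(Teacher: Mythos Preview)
Your proof is correct and follows essentially the same two-pass bootstrap as the paper: first extract $(\varphi_0,\psi_0)$ via the AGSP, then ``compress out'' $\psi_0$ and rerun the bootstrap for $\varphi_1$. The differences are cosmetic: the paper applies $A$ \emph{twice} in the first pass (giving $R(\psi_0)=D^2$ and $1-q=O(\Delta)$) and in the second pass subtracts $\psi_0$ \emph{after} applying $A$ (namely $\phi'=A\psi'-\langle\psi_0,\psi'\rangle\psi_0$), so that the stated hypothesis $\Delta D^2\le 1/100$ suffices; you instead apply $A$ once (so $1-q=O(\Delta D)$) and project out $\psi_0$ \emph{before} applying $A$, which forces the slightly stronger but---as you correctly observe---equally cost-free hypothesis $\Delta D^3\le c_0$.
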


\begin{proof}
Let $P'$ be the projection onto $G'$. Consider
\begin{equation} \label{opt}
\max_{R(\psi)=1}\|P'\psi\|^2.
\end{equation}
As the set $\{\psi|R(\psi)=1\}$ of product states is compact, the optimal state exists and is still denoted by $\psi$. This state and $\phi:=A\psi$ can be decomposed as
\begin{equation}
\psi=c_g\psi_g+c_e\psi_e,~\phi=c'_g\phi_g+c'_e\phi_e,
\end{equation}
where $\psi_g,\phi_g\in G'$ and $\psi_e,\phi_e\perp G'$. The definition of AGSP implies
\begin{equation}
c_g=c'_g,~\psi_g=\phi_g,~|c'_e|^2\le\Delta,~R(\phi)\le D.
\end{equation}
The Schmidt decomposition of the unnormalized state $\phi$ implies
\begin{equation}
\phi=\sum_{i=1}^{R(\phi)}\lambda_iL_i\otimes R_i\Rightarrow\sum_{i=1}^{R(\phi)}\lambda_i^2=\|\phi\|^2=|c'_g|^2+|c'_e|^2\le|c_g|^2+\Delta.
\end{equation}
Since $|c_g|^2$ is the optimal value in (\ref{opt}),
\begin{eqnarray}
&&|c_g|=|\langle\psi_g,\phi\rangle|\le\sum_{i=1}^{R(\phi)}\lambda_i|\langle\psi_g,L_i\otimes R_i\rangle|\le\sum_{i=1}^{R(\phi)}\lambda_i\|P'L_i\otimes R_i\|\le|c_g|\sum_{i=1}^{R(\phi)}\lambda_i\Rightarrow1\le\left(\sum_{i=1}^{R(\phi)}\lambda_i\right)^2\nonumber\\
&&\le R(\phi)\sum_{i=1}^{R(\phi)}\lambda_i^2\le D(|c_g|^2+\Delta)\le D|c_g|^2+1/100\Rightarrow|c_g|^2\ge99D^{-1}/100\ge99\Delta.
\end{eqnarray}
Applying the AGSP twice, the state $\psi_0:=A^2\psi/\|A^2\psi\|$ satisfies
\begin{equation} \label{r1}
\|P'\psi_0\|^2\ge1-\Delta/50,~R(\psi_0)=D^2=2^{\tilde O(\epsilon^{-1})}.
\end{equation}
Define $\varphi_0=P'\psi_0/\|P'\psi_0\|\in G'$ and $\varphi_1\in G'$ such that $\varphi_0\perp\varphi_1$. Clearly,
\begin{equation}
|\langle\varphi_0,\psi_0\rangle|^2\ge1-\Delta/50,~\langle\varphi_1,\psi_0\rangle=0,~|\langle\varphi_e,\psi_0\rangle|^2\le\Delta/50~\mathrm{for}~\forall~\varphi_e\perp G'.
\end{equation}

Consider
\begin{equation}
\max_{R(\psi')=1}|\langle\varphi_1,\psi'\rangle|^2.
\end{equation}
As the set $\{\psi'|R(\psi')=1\}$ of product states is compact, the optimal state exists and is still denoted by $\psi'$. This state and $\phi':=A\psi'-\langle\psi_0,\psi'\rangle\psi_0$ can be decomposed as
\begin{equation}
\psi'=c_0\varphi_0+c_1\varphi_1+c_e\varphi_e,~\phi'=c_1\varphi_1+c_r\varphi_r,
\end{equation}
where $\varphi_e\perp G'$ and $\varphi_r\perp\varphi_1$. Specifically,
\begin{eqnarray}
&&c_r\varphi_r=c_0(A\varphi_0-\langle\psi_0,\varphi_0\rangle\psi_0)-c_1\langle\psi_0,\varphi_1\rangle\psi_0+c_e(A\varphi_e-\langle\psi_0,\varphi_e\rangle\psi_0)\nonumber\\
&&\Rightarrow|c_r|\le0.2|c_0|\sqrt\Delta+1.2|c_e|\sqrt\Delta\le1.4\sqrt\Delta~\mathrm{and}~R(\phi')\le D+R(\psi_0)\le D+D^2\le2D^2.
\end{eqnarray}
The Schmidt decomposition of the unnormalized state $\phi'$ implies
\begin{equation}
\phi'=\sum_{i=1}^{R(\phi')}\lambda'_iL'_i\otimes R'_i\Rightarrow\sum_{i=1}^{R(\phi')}\lambda_i'^2=\|\phi'\|^2=|c_1|^2+|c_r|^2\le|c_1|^2+2\Delta.
\end{equation}
Since $\psi'$ is the optimal state,
\begin{eqnarray}
&&|c_1|=|\langle\varphi_1,\phi'\rangle|\le\sum_{i=1}^{R(\phi')}\lambda'_i|\langle\varphi_1,L'_i\otimes R'_i\rangle|\le\sum_{i=1}^{R(\phi')}\lambda'_i|\langle\varphi_1,\psi'\rangle|=|c_1|\sum_{i=1}^{R(\phi')}\lambda'_i\Rightarrow1\le\left(\sum_{i=1}^{R(\phi')}\lambda'_i\right)^2\nonumber\\
&&\le R(\phi')\sum_{i=1}^{R(\phi')}\lambda_i'^2\le2D^2(|c_1|^2+2\Delta)\le2D^2|c_1|^2+1/25\Rightarrow|c_1|^2\ge12D^{-2}/25\ge48\Delta.
\end{eqnarray}
Hence $\psi'_0=\phi'/\|\phi'\|$ is a state with $R(\psi'_0)=R(\phi')\le2D^2=2^{\tilde O(\epsilon^{-1})}$ and $|\langle\varphi_1,\psi'_0\rangle|^2\ge24/25$.
\end{proof}

Recall that $G$ is the ground-state space of $H$.

\begin{lemma}
For any $\Psi\in G$, there is a sequence of approximations $\{\Psi_i\}$ such that\\
(a) $|\langle\Psi_i,\Psi\rangle|\ge1-2^{-\Omega(i)}$;\\
(b) $R_i:=R(\Psi_i)=2^{\tilde O(\epsilon^{-1}+\epsilon^{-1/4}i^{3/4})}$.
\end{lemma}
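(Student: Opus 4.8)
The plan is to build each $\Psi_i$ essentially ``from scratch'', by re-running the product-state optimization of Lemma~\ref{ls} with an AGSP whose scale is tuned to the target error $2^{-\Omega(i)}$, rather than by iterating a single fixed AGSP. Iterating a fixed $(\Delta,D)$-AGSP $A$ is useless here: since $A$ fixes its ground-state space $G'$, any $\xi=\xi_\parallel+\xi_\perp$ with $\xi_\parallel\in G'$, $\xi_\perp\perp G'$ satisfies $A^k\xi=\xi_\parallel+A^k\xi_\perp\to\xi_\parallel$, so the overlap of $A^k\xi/\|A^k\xi\|$ with a fixed unit vector $\varphi\in G'$ tends to $|\langle\varphi,\xi_\parallel\rangle|^2/\|\xi_\parallel\|^2$, which is strictly below $1$ unless $\xi_\parallel$ is already almost proportional to $\varphi$. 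This is exactly why Lemma~\ref{ls} starts from the optimizer of $\max_{R(\psi)=1}\|P'\psi\|^2$, whose $G'$-component defines $\varphi_0$, and it forces us to redo the optimization at each scale with a fresh, stronger AGSP.

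Concretely, for $i\le\tilde O(\epsilon^{-1})$ the base AGSP ($s=s_0$, $t=t_0$, $\Delta=2^{-\tilde\Omega(\epsilon^{-1})}$, $D=2^{\tilde O(\epsilon^{-1})}$) already suffices, while for $i=\tilde\Omega(\epsilon^{-1})$ I would use $A_{(i)}=C_l(H^{(t)})$ with $t=\Theta(i)$, $s=\Theta(\epsilon^{-1/4}i^{3/4})$, $l=s^2/2$. Then $s\le i$, so $\epsilon'_\infty=\Theta(i)$, and Lemmas~\ref{l2} and~\ref{l8} give
\[
\Delta_{(i)}=2^{-\Omega(s^2\sqrt{\epsilon/\epsilon'_\infty})}=2^{-\Omega(i)},\qquad D_{(i)}=s^{O(s)}=2^{\tilde O(\epsilon^{-1/4}i^{3/4})},
\]
with $\Delta_{(i)}D_{(i)}^2\le1/100$ (here $i=\tilde\Omega(\epsilon^{-1})$ is what makes the $\Omega(i)$ in $-\log\Delta_{(i)}$ beat the $\tilde O(\epsilon^{-1/4}i^{3/4})$ in $2\log D_{(i)}$). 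Moreover, by Theorem~\ref{t2} the ground-state space $G'_{(i)}$ of $H^{(t)}$ is $2^{-\Omega(t)}=2^{-\Omega(i)}$-close to $G$; in the base case the corresponding gap is $2^{-\tilde\Omega(\epsilon^{-1})}$.

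Now run the argument of Lemma~\ref{ls} with $A_{(i)}$ (writing $\Delta,D,G'$ for $\Delta_{(i)},D_{(i)},G'_{(i)}$). It produces $\psi_0$ with $|\langle\varphi_0,\psi_0\rangle|^2\ge1-\Delta/50$ and a state $\psi'_0$ whose defect $1-|\langle\varphi_1,\psi'_0\rangle|^2$, though only a constant, is --- by the computation inside that proof --- up to $O(\Delta)$ concentrated in the direction orthogonal to $G'$, hence damped by a factor $\Delta$ under one more application of $A_{(i)}$; thus $\psi_1:=A_{(i)}\psi'_0/\|A_{(i)}\psi'_0\|$ satisfies $|\langle\varphi_1,\psi_1\rangle|^2\ge1-O(\Delta)$, with $R(\psi_0),R(\psi_1)\le D^{O(1)}$. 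Given $\Psi\in G$, set $\Psi':=P'\Psi/\|P'\Psi\|\in G'$, so $|\langle\Psi,\Psi'\rangle|^2\ge1-2^{-\Omega(i)}$, and write $\Psi'=a\varphi_0+b\varphi_1$ with $|a|^2+|b|^2=1$ after fixing the phases of $\varphi_0,\varphi_1$ so that $\langle\varphi_0,\psi_0\rangle,\langle\varphi_1,\psi_1\rangle>0$. Then $\Psi_i:=(a\psi_0+b\psi_1)/\|a\psi_0+b\psi_1\|$ has Schmidt rank at most $R(\psi_0)+R(\psi_1)=2^{\tilde O(\epsilon^{-1}+\epsilon^{-1/4}i^{3/4})}$, and the triangle inequality gives $\|\Psi_i-\Psi'\|=O(\sqrt\Delta)$, hence $\|\Psi_i-\Psi\|=2^{-\Omega(i)}$ and $|\langle\Psi_i,\Psi\rangle|\ge1-2^{-\Omega(i)}$. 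Note this handles the whole ground-state space uniformly: $\varphi_0^{(i)},\varphi_1^{(i)}$ form an orthonormal basis of $G'_{(i)}$, and the particular $\Psi$ enters only through its coefficients $a,b$ in that basis.

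The step I expect to be the main obstacle is conceptual rather than computational: realizing that a fixed AGSP cannot shrink the error \emph{within} the nearly degenerate ground-state space (first paragraph), so that the optimization of Lemma~\ref{ls} must be redone at every scale and the fixed target $\Psi$ re-expressed in the new near-orthonormal family --- the ``keeping track of a set of basis vectors'' point of the introduction. The one delicate estimate is the joint tuning of the two free parameters $s,t$ of the Lagrange-interpolated AGSP against $i$ and $\epsilon$: because $-\log\Delta$ scales like $s^2\sqrt{\epsilon/t}$ while one is compelled to take $t=\Theta(i)$ (to make $G'_{(i)}$ close enough to $G$), the demand $-\log\Delta=\Omega(i)$ forces $s=\Theta(\epsilon^{-1/4}i^{3/4})$, which is the source of the $\epsilon^{-1/4}i^{3/4}$ in the exponent --- and, downstream, of the $\tilde O(\alpha^{-3}/\epsilon)$ bound on $R_\alpha$.
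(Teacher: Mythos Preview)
Your parameter tuning is exactly right, and the scale-$i$ AGSP with $t=\Theta(i)$, $s=\Theta(\epsilon^{-1/4}i^{3/4})$ is the same as the paper's. But there is a real gap, and it is precisely the one you warn about in your own first paragraph. You claim that the defect of $\psi'_0$ is ``up to $O(\Delta)$ concentrated in the direction orthogonal to $G'$'', so that one more application of $A_{(i)}$ yields $|\langle\varphi_1,\psi_1\rangle|^2\ge1-O(\Delta)$. The computation in Lemma~\ref{ls} does not give this. There $\psi'_0=\phi'/\|\phi'\|$ with $\phi'=c_1\varphi_1+c_r\varphi_r$, and the $\varphi_0$-component of $c_r\varphi_r$ is indeed $O(\sqrt\Delta)$ in \emph{absolute} terms --- but $\|\phi'\|$ is only guaranteed to be $\ge|c_1|\ge\sqrt{48\Delta}$, so after normalization the $\varphi_0$-component of $\psi'_0$ is merely a small constant (of order $1/\sqrt{2400}$), not $O(\sqrt\Delta)$. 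Since $A_{(i)}$ fixes $G'$, this $\varphi_0$-component survives unchanged in $\psi_1=A_{(i)}\psi'_0/\|A_{(i)}\psi'_0\|$, and your triangle-inequality step then yields only $\|\Psi_i-\Psi'\|=O(1)$, not $2^{-\Omega(i)}$. The fix is easy --- either repeat the $\psi_0$-subtraction when forming $\psi_1$, or (cleaner) express $\Psi'$ in the nearly-orthonormal basis $(P'\psi_0/\|P'\psi_0\|,\,P'\psi_1/\|P'\psi_1\|)$ of $G'$ rather than in $(\varphi_0,\varphi_1)$, since each $\psi_j$ \emph{is} $O(\sqrt\Delta)$-close to its own $G'$-projection.

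It is also worth noting that the paper takes a simpler route that sidesteps this issue entirely: it runs Lemma~\ref{ls} \emph{once}, at the base scale $(s_0,t_0)$, to obtain fixed seeds $\psi_0,\psi'_0$, and then applies each $A_i$ a single time to these same seeds. The resulting sequences $\psi_i:=A_i\psi_0/\|A_i\psi_0\|$ and $\psi'_i:=A_i\psi'_0/\|A_i\psi'_0\|$ converge exponentially in $i$ to fixed limits $\psi_\infty,\psi'_\infty\in G$ that are nearly orthogonal at the constant level (being close to $\varphi_0,\varphi_1$); one writes $\Psi=c\psi_\infty+c'\psi'_\infty$ with $|c|,|c'|=O(1)$ and sets $\Psi_i=c\psi_i+c'\psi'_i$. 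The basis $(\psi_\infty,\psi'_\infty)$ is fixed once and for all, and the constant-level near-orthogonality is harmless because the $2^{-\Omega(i)}$ comes from the \emph{convergence} $\psi_i\to\psi_\infty$. So your opening observation that ``iterating a fixed AGSP is useless'' is correct, but the resolution is not to redo the optimization at every scale: it is to apply a single \emph{varying} AGSP $A_i$ to \emph{fixed} seeds.
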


\begin{proof}
Let $t_i=t_0+i$. Theorem \ref{t2}(b) is a quantitative statement that $G$ and $\mathrm{span}\{\phi_0^{(t_i)},\phi_1^{(t_i)}\}$ are exponentially close. In particular, setting $t_0$ to be a sufficiently large constant implies that $G'$ and $\mathrm{span}\{\phi_0^{(t_i)},\phi_1^{(t_i)}\}$ are close up to a small constant. Hence Lemma \ref{ls}(ii) implies
\begin{equation} \label{almost}
|\langle\phi_0^{(t_i)},\psi_0\rangle|^2+|\langle\phi_1^{(t_i)},\psi_0\rangle|^2\ge9/10.
\end{equation}
Let $l_i=s_i^2/2=\Theta(\sqrt{t_i^3/\epsilon})=O(t_i^2)$ such that the assumption
\begin{equation}
1/10\ge l_i^2(\epsilon'_1-\epsilon'_0)/(\epsilon'_\infty-\epsilon'_2)=O(s_i^32^{-\Omega(s_i)})
\end{equation}
is satisfied with sufficiently large $s_i>O(1)$. Lemmas \ref{l2}, \ref{l8} imply a $(\Delta_i,D_i)$-AGSP $A_i=C_{l_i}(H^{(t_i)})$ for $H^{(t_i)}$ with
\begin{equation}
\Delta_i=2^{-\Omega(s_i^2\sqrt{\epsilon/t_i})}=2^{-\Omega(t_i)},~D_i=s_i^{O(s_i)}=2^{\tilde O(\epsilon^{-1/4}t_i^{3/4})}.
\end{equation}
Hence the sequence of operators $\{A_i\}_{i=1}^{+\infty}$ converges exponentially due to Theorem \ref{t2}(b). Clearly, $A_\infty:=\lim_{i\rightarrow+\infty}A_i$ is just the projection onto $G$. Let $\psi_i:=A_i\psi_0/\|A_i\psi_0\|$ with $\psi_\infty\in G$ such that\\
\begin{equation}
R(\psi_i)\le R(\psi_0)D_i\le2^{\tilde O(\epsilon^{-1}+\epsilon^{-1/4}t_i^{3/4})},~|\langle\psi_i,\psi_\infty\rangle|\ge1-2^{-\Omega(t_i)}.
\end{equation}
Similarly, Let $\psi'_i:=A_i\psi'_0/\|A_i\psi'_0\|$ with $\psi'_\infty\in G$ such that\\
\begin{equation}
R(\psi'_i)\le2^{\tilde O(\epsilon^{-1}+\epsilon^{-1/4}t_i^{3/4})},~|\langle\psi'_i,\psi'_\infty\rangle|\ge1-2^{-\Omega(t_i)}.
\end{equation}
(\ref{almost}) with $i=+\infty$ is a quantitative statement that $\psi_0$ is close to $G$, and hence $\psi_0$ and $\psi_\infty$ are close up to a small constant. Since $\psi_0$ and $\varphi_0$ are close up to a small constant, $\psi_\infty$ and $\varphi_0$ are also close. The same arguments imply that $\psi'_\infty$ and $\varphi_1$ are close. Hence, $\psi_\infty$ and $\psi'_\infty$ are almost orthogonal. Any state $\Psi\in G$ can be decomposed as
\begin{equation}
\Psi=c\psi_\infty+c'\psi'_\infty,~|c|=O(1),~|c'|=O(1).
\end{equation}
Then, $\{\Psi_i:=c\psi_i+c'\psi'_i\}_{i=0}^{+\infty}$ is a sequence of approximations to $\Psi$ with (b) $R(\Psi_i)=2^{\tilde O(\epsilon^{-1}+\epsilon^{-1/4}t_i^{3/4})}$. (a) also follows immediately.
\end{proof}

\begin{proof} [Proof of Theorem \ref{t1}]
(a) Let $\Lambda_i$ be the Schmidt coefficients of $\Psi$ across the middle cut. Then,
\begin{equation}
1-p_i:=\sum_{j=1}^{R_i}\Lambda_j^2\ge|\langle\Psi_i,\Psi\rangle|^2\ge1 - 2^{-\Omega(i)}.
\end{equation}
The Renyi entanglement entropy of $\Psi$ is upper bounded by
\begin{eqnarray}
&&\frac{\log\left(R^{1-\alpha}_0+\sum_{i=0}^{+\infty}p_i^\alpha(R_{i+1}-R_i)^{1-\alpha}\right)}{1-\alpha}\le\frac{\log\left(2^{(1-\alpha)\tilde O(\epsilon^{-1})}+\sum_{i=0}^{+\infty}2^{(1-\alpha)\tilde O(\epsilon^{-1}+\epsilon^{-1/4}i^{3/4})-\alpha\Omega(i)}\right)}{1-\alpha}\nonumber\\
&&=\tilde O(\epsilon^{-1})+\frac{\log(O(1)+2^{(1-\alpha)\tilde O((1-\alpha)^3\alpha^{-3}/\epsilon)})}{1-\alpha}=\tilde O(\epsilon^{-1}+(1-\alpha)^3\alpha^{-3}/\epsilon)=\tilde O(\alpha^{-3}\epsilon^{-1}).
\end{eqnarray}

(b) Finally we sketch the proof that $\Psi$ is well approximated by an MPS of small bond dimension. We first express it exactly as an MPS of possibly exponential (in $n$) bond dimension and then truncate the MPS cut by cut. It is shown in \cite{VC06} the error accumulates at most additively: If an inverse polynomial overall error $1/p(n)=1/\mathrm{poly}(n)$ is allowed, it suffices that the error of truncating each cut is $1/(np(n))=1/\mathrm{poly}(n)$. We require that
\begin{equation}
1/\mathrm{poly}(n)=p_i\Rightarrow i=O(\log n),
\end{equation}
and hence the bond dimension is $2^{\tilde O(\epsilon^{-1/4}\log^{3/4}n)}$.
\end{proof}

\section{Notes} \label{notes}

For nondegenerate systems ($f=1$), the upper bound claimed in \cite{AKLV13} on the von Neumann entanglement entropy is $\tilde O(\epsilon^{-1})$. However, the proof in \cite{AKLV13} of this claim appears incomplete. Specifically, in Lemma 6.3 in \cite{AKLV13} $t_0$ should be at least $O(\epsilon_0/\epsilon^2+\epsilon^{-1})$ in order that the robustness theorem (Theorem 6.1 in \cite{AKLV13}) applies to $H^{(t_0)}$, i.e., the robustness theorem does not guarantee that $H^{(t_0)}$ is gapped if $t_0=O(1)$. Then $s=\tilde O(\epsilon^{-1})$ (and $l=s^2$) does not give an AGSP for $H^{(t_0)}$ with $\Delta D\le1/2$, but $s=\tilde O(\epsilon^{-3/2})$ does. A straightforward calculation shows that the upper bound $\tilde O(\epsilon^{-3/2})$ on the von Neumann entanglement entropy follows from the proof in \cite{AKLV13}. Nevertheless, in the present paper I have shown that the claim in \cite{AKLV13} is correct, because Theorem \ref{t2} (as a stronger version of the robustness theorem) only requires $t\ge O(\log\epsilon^{-1})$.

After the appearance of the present paper on arXiv \cite{area1}, Section \ref{s3} (perturbation theory) was extended to higher dimensions \cite{AKL14}. In particular, Theorems 4.2, 4.6 in \cite{AKL14} are generalizations of Lemmas \ref{l6}, \ref{l7}, respectively.

\bibliography{Hua14}

\end{document}